\theoremstyle{plain}
\newtheorem{theorem}{Theorem}
\newtheorem{proposition}[theorem]{Proposition}
\newtheorem{corollary}{Corollary}
\theoremstyle{remark}
\newtheorem{example}{Example}
\newtheorem{remark}{Remark}
\theoremstyle{definition}
\newtheorem{construction}{Construction}
\begin{document}

\title{Codes with restricted overlaps: \\ expandability, constructions, and bounds}

\author{Lidija Stanovnik}
\date{}

\affil{University of Ljubljana, Faculty of Computer and Information Science,\\ Večna pot 113, Ljubljana, Slovenia}

\maketitle


\abstract{\noindent Consider a $q$-ary block code satisfying the property that no $l$-letters long codeword's prefix occurs as a suffix of any codeword for $l$ inside some interval. We determine a general upper bound on the maximum size of these codes and a tighter bound for codes where overlaps with lengths not exceeding $k$ are prohibited. We then provide constructions for codes with various restrictions on overlap lengths and use them to determine lower bounds on the maximum sizes. In particular, we construct $(1,k)$-overlap-free codes where $k \geq n/2$ and $n$ denotes the block size, expand a known construction of $(k,n-1)$-overlap-free codes, and combine the ideas behind both constructions to obtain $(t_1,t_2)$-overlap-free codes and codes that are simultaneously $(1,k)$- and $(n-k,n-1)$-overlap-free for some $k < n/2$. In the case when overlaps of lengths between 1 and $k$ are prohibited, we complete the characterisation of non-expandable codes started by Cai, Wang, and Feng (2023)~\cite{cai:2023}.
}


\section{Introduction}
Two not necessarily distinct words $u$ and $v$ over an alphabet $\Sigma$ have a $t$-overlap if the $t$ letters long prefix of $u$ occurs as a suffix of $v$, or vice-versa. A code $C \subseteq \Sigma^n$ is $(t_1,t_2)$-overlap-free if for all $t_1 \leq t \leq t_2$ no codewords in $C$ have a $t$-overlap~\cite{blackburn:2023}.
These codes may be useful in channels where bursts of at least $\min\; \{t_1,n-t_2\}$ and at most $\min\; \{n-t_1,t_2\}$ deletions or insertions are expected. They are guaranteed to detect deletions after $2n$ symbols are read and insertions after $3n$ symbols are read. Some parameter values may even perform better.

So far, only special cases of $(t_1,t_2)$-overlap-free codes have been studied. Non-overlapping codes, i.e., the $(1,n-1)$-overlap-free codes, were established for synchronisation sixty years ago~\cite{levenshtein:1970}. 
At the beginning of this millennium, $T$-shift synchronisation codes were considered for data storage~\cite{ahlswede:2005,ahlswede:2008}. Every uni- and bidirectional $T$-shift synchronisation code is a $(n-T,n-1)$-overlap-free code, but the two classes only coincide when $T=1$. 
In 2015, the term $\kappa$-weakly mutually uncorrelated codes was introduced for $(\kappa, n-1)$-overlap-free codes that were motivated by applications in DNA-based data storage~\cite{Yazdi:2015,Yazdi:2018,Kumar:2023,Lu:2023,Chee:2020}.
Recently, Blackburn et al., who defined the $(t_1,t_2)$-overlap-free codes, spotlighted the class of $(1,k)$-overlap-free codes that have an advantage over the class of $(n-k,n-1)$-overlap-free codes in channels where deletions are more common than insertions~\cite{blackburn:2023}. They also suggested that codes that are simultaneously $(1,k)$-overlap-free and $(n-k,n-1)$-overlap-free for some $k < n/2$ might be fruitful for applications since they inherit their deletion-detection capabilities from $(1,k)$-overlap-free codes and insertion-detection capabilities from $(n-k,n-1)$-overlap-free codes. In the same year, Cai, Wang, and Feng~\cite{cai:2023} showed that the study of a $(1,k)$-overlap-free code of length $n$ for $n \geq 2k$ can be reduced to the study of a $(1,k)$-overlap-free code of length $2k$. This is because for any $n > 2k$, the $(1,k)$-overlap property is independent of the symbols placed at positions $k+1, \dots, n-k$. The $(1,k)$-overlap-free codes where $n/2 < k < n-1$ have not been studied much. To our knowledge, only one construction of $(1,n-2)$-overlap-free codes exists. It was presented in \cite{cai:2023}.

A code that attains the largest possible size is called \emph{maximum}.
 Herein, the maximum size of a $(t_1,t_2)$-overlap-free code of length $n$ over an alphabet with $q$ elements will be denoted by $S_{t_1}^{t_2}(q,n)$.
The tightest known upper bound on the size of maximum non-overlapping codes given in Eq.~\eqref{eq:1} was determined in \cite{levenshtein:1970} by studying the corresponding generating function. In the construction from \cite{Blackburn:2015}, the upper bound is attained when $n$ divides $q$. 
\begin{align}\label{eq:1}
    S_{1}^{n-1}(q,n) \leq \left( \frac{n-1}{n} \right)^{n-1} \frac{q^n}{n}.
\end{align}
The first upper bound on $S_k^{n-1}(q,n)$ was determined in \cite{Yazdi:2018}. Its improved version published in \cite{blackburn:2023} is given in Eq.~\eqref{eq:2}.
\begin{align}
    S_k^{n-1}(q,n) &\leq \frac{q^n}{2n-2k+1}. \label{eq:2}
\end{align}
A similar result on $S_1^k(q,n)$ for $k \leq n/2$ was first derived in \cite{blackburn:2023}.
The upper bound was improved in \cite{cai:2023} (Eq.~\eqref{eq:3} below). Moreover, in the same article, two exact values were determined. They are given in Eqs.~\eqref{eq:4} and \eqref{eq:5} below. In Eq.~\eqref{eq:5}, $\lfloor x\rceil$ denotes the rounding of $x$ to the nearest integer.
\begin{align}
    S_1^k(q,n) &\leq \frac{q^n - q^{n-k}}{2k}, \; k \leq n/2, \label{eq:3}\\
    S_1^k(q,n) &= q^{n-2k}S_1^k(q,2k), \; k \leq n/2, \label{eq:4}\\
    S_1^2(q,n) &= \Bigl\lfloor\frac{q}{3}\Bigr\rceil \left(q - \Bigl\lfloor\frac{q}{3}\Bigr\rceil\right)^2 q^{n-3}, \; n \geq 4. \label{eq:5}
\end{align}
The author is not aware of any existing bounds for other values of $t_1$ and $t_2$. Hence, this paper aims to derive new upper and lower bounds on $S_{t_1}^{t_2}$ and provide corresponding constructions.

The outline of the paper is as follows. Section \ref{section:preliminaries} introduces the necessary notation and summarises existing results. A general upper bound on $S_{t_1}^{t_2}$ and a tighter upper bound for the size of $(1,k)$-overlap-free codes when $k < n \leq 2k$ is determined in Section \ref{section:upper}. In Section~\ref{section:constructions}, we develop methods to construct both types of codes and expand a construction of $(k, n-1)$-overlap-free codes from \cite{Yazdi:2018}. Moreover, we characterise maximal $(1,k)$-overlap-free codes and construct codes that are simultaneously $(1,k)-$ and $(n-k,n-1)$-overlap-free.
Finally, these constructions are used to determine the corresponding lower bounds on $S_{t_1}^{t_2}$ in Section~\ref{section:lower}.

\section{Preliminaries}\label{section:preliminaries}
 This paper will use the following notation. The set of integers $1,\dots, n$ is denoted by $\left[n\right]$, and the set of integers from $i$ to $n$ by $\left[i,n\right]$.
 For sets of words, $L$ and $R$, $LR$ denotes the concatenation of sets $L$ and $R$, i.e. the set of all words of the form $lr$, where $l \in L$ and $r \in R$. If $x$ is a word, we write $xR$ and $Lx$ to denote the sets $\{x\}R$ and $L\{x\}$ respectively. $L^i$ denotes the concatenation of the set $L$ with itself $i$ times. The Kleene star denotes the smallest superset that is closed under concatenation and includes the empty set $L^* = \bigcup_{i \geq 0} L^i$, and the Kleene plus denotes the smallest superset that is closed under concatenation and does not include the empty set  $L^+ = \bigcup_{i > 0} L^i$.

We consider words over an alphabet $\Sigma$ with at least two letters and denote the number of letters in $\Sigma$ by $q$. Whenever we deal with a code $C \subseteq \Sigma^n$, we assume $n \geq 2$. The length of the word $w$ is denoted by $\lvert w \rvert$.
A word $w \in \Sigma^+$ has a \emph{period} $i$ if $w = (w_1\cdots w_i)^+$. The smallest number $i$ satisfying this relation is called the \emph{least period}, and words with the least period $\lvert w \rvert$ are called \emph{primitive}. It is known (see e.g. \cite{petersen:1996}) that the number of primitive words in $\Sigma^n$ equals
 \begin{align}
     P_q(n) = \sum_{d\mid n} \mu(d) q^{n/d},
 \end{align}
where $\mu(d)$ is the Möbius function
\begin{align*}
    \mu(n) = \begin{cases}
        (-1)^t & \text{if } n = p_1p_2\cdots p_t \text{ for distinct primes } p_i, \\
        1 & \text{if } n = 1, \\
        0 & \text{if } n \text{ is divisible by a square.}
    \end{cases}
\end{align*}
A \emph{composition} of a positive integer $n$ with $k$ parts is a $k$-tuple of positive integers whose sum equals $n$. We denote the number of parts of composition $\alpha$ by $par(\alpha)$, and its size by $\lvert \alpha \rvert \coloneqq \alpha_1 + \cdots + \alpha_{par(\alpha)}$.

Let $(L_i,R_i)_{i < k}$ be a family of pairs of disjoint sets that satisfy the following constraints. The sets $L_1$ and $R_1$ are non-empty, and their union equals $\Sigma$. For every $i \in \left[2,k\right]$, the sets $L_i$ and $R_i$ are allowed to be empty, but their union should equal $\bigcup_{j\in\left[i-1\right]} L_j R_{i-j} $. The authors of \cite{fimmel:2023} proposed that such a family of pairs of sets can be used to construct non-overlapping codes (see Construction~\ref{construction:fimmel} below), and in particular, that it constructs all \emph{maximal} non-overlapping codes. Recall that a code is called maximal if the inclusion of any new word violates the code's definition. Otherwise, it is called \emph{expandable}. 

\begin{construction}
    \label{construction:fimmel}   \cite{fimmel:2023,stanovnik:2024}
    The code $C = \bigcup_{i\in\left[n-1\right]} L_i R_{n-i} \subseteq \Sigma^n$ is non-overlapping.
\end{construction}

In \cite{stanovnik:2024}, we have shown that Construction~\ref{construction:fimmel} provides all maximal non-overlapping and some expandable non-overlapping codes. We will build new constructions of $(t_1,t_2)$-overlap-free codes upon it. To prove their correctness, we need some properties of the family of pairs of disjoint sets $(L_i,R_i)_{i <k}$. They are given in Propositions~\ref{proposition:z} and \ref{proposition:unique_index} (below). 

\begin{proposition}
  \label{proposition:z}
  \cite{stanovnik:2024}
    Let $C=\bigcup_{i\in\left[n-1\right]} L_i R_{n-i}$ be a code from Construction~\ref{construction:fimmel} and $X \coloneqq \bigcup_{i\in\left[2n+1\right]} X_i$, such that $X_{2i-1} = L_i$, $X_{2i} = R_i$, and $X_{2n+1} = C$. Then no proper prefix in $X$ occurs as a proper suffix in $X$.
\end{proposition}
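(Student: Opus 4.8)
The plan is to prove the statement in the equivalent form: there is no nonempty word $p$ that is simultaneously a proper prefix of some $u \in X$ and a proper suffix of some $v \in X$. The one structural fact I would isolate first is that every word of $X$ of length at least two admits a left--right factorisation inside $X$: if $w \in L_i \cup R_i$ with $i \geq 2$, the defining relation $L_i \cup R_i = \bigcup_{j \in [i-1]} L_j R_{i-j}$ gives $w = l r$ with $l \in L_j \subseteq X$ and $r \in R_{i-j} \subseteq X$, and if $w \in C$ the definition of $C$ in Construction~\ref{construction:fimmel} gives a factorisation of the same shape. I only need such a factorisation to exist, not to be unique.

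I would then take a counterexample $(p,u,v)$ with $|u|+|v|$ minimal. A nonempty proper prefix forces $|u| \geq 2$ and a nonempty proper suffix forces $|v| \geq 2$, so I may fix factorisations $u = l_u r_u$ and $v = l_v r_v$ with all four blocks in $X$. Two reductions shrink the witnesses while keeping $p$ fixed: if $|p| < |r_v|$ then $p$ is a proper suffix of $r_v \in X$, so $(p,u,r_v)$ is a smaller counterexample; symmetrically, if $|p| < |l_u|$ then $p$ is a proper prefix of $l_u \in X$ and $(p,l_u,v)$ is smaller. By minimality I may therefore assume $|l_u| \leq |p|$ and $|r_v| \leq |p|$, which lets me write $p = l_u r_u' = l_v' r_v$, where $r_u'$ is a (possibly empty) proper prefix of $r_u$ and $l_v'$ is a (possibly empty) proper suffix of $l_v$.

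The terminal case is when $p$ equals both blocks, namely $p = l_u \in L_a$ and $p = r_v \in R_a$ with $a = |p|$; then $p \in L_a \cap R_a = \emptyset$, the desired contradiction, which also covers the length-one situation through $L_1 \cap R_1 = \emptyset$. Every remaining case exhibits a strictly lighter prefix--suffix collision, contradicting minimality. For example, if $p = l_u$ while $r_v$ is a proper suffix of it, then $l_v'$ is at once a proper prefix of $l_u$ and a proper suffix of $l_v$, so $(l_v', l_u, l_v)$ is lighter; the mirror case $p = r_v$ is handled by $r_u'$ together with the lighter pair $(r_u, r_v)$.

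The main obstacle is the genuinely two-sided case, where both $r_u'$ and $l_v'$ are nonempty, so $p$ straddles the internal boundary of $u$ and of $v$ simultaneously. Here I would compare the boundary positions $|l_u|$ and $|p|-|r_v|$. If they coincide, the two factorisations of $p$ agree block by block, so $l_u \in L_a$ is a proper suffix of $l_v$ while remaining a proper prefix of $u$, yielding the lighter pair $(u, l_v)$. If $|l_u| < |p|-|r_v|$, then $l_v'$ is a proper prefix of $u$ (being a prefix of $p$) and a proper suffix of $l_v$, and it is strictly shorter than $p$; the opposite inequality is symmetric and produces $r_u'$ as a proper prefix of $r_u$ and a proper suffix of $r_v$. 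In each branch the new collision is witnessed by words of $X$ of strictly smaller total length, contradicting minimality and closing the argument. The only properties of the family used are the disjointness $L_i \cap R_i = \emptyset$ and the existence of the left--right factorisation; the uniqueness of that factorisation (Proposition~\ref{proposition:unique_index}) is not needed here and can instead be deduced afterwards.
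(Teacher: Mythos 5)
Your argument is correct, but note that this paper never proves Proposition~\ref{proposition:z} itself: the statement is imported from \cite{stanovnik:2024}, where the proof runs through the sequence of decompositions $(p_m(w))_{m\geq 0}$ recalled in Section~\ref{section:preliminaries} --- a word is decomposed letter by letter over $\{\text{l},\text{r}\}$ and adjacent lr-blocks are repeatedly merged via the relation $L_i\cup R_i=\bigcup_{j}L_jR_{i-j}$, after which a prefix/suffix collision is excluded by comparing the two resulting decompositions of the overlapping segment. Your route is genuinely different: a minimal-counterexample induction on $\lvert u\rvert+\lvert v\rvert$ that uses only the disjointness $L_i\cap R_i=\emptyset$ and the \emph{existence} (not uniqueness) of a factorisation $w=lr$ with $l\in L_j$ and $r\in R_{i-j}$ for every $w\in X$ of length at least two. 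I checked the case analysis: the two preliminary reductions legitimately force $\lvert l_u\rvert\le\lvert p\rvert$ and $\lvert r_v\rvert\le\lvert p\rvert$, the terminal case lands in $L_{\lvert p\rvert}\cap R_{\lvert p\rvert}=\emptyset$ because every word of $L_i\cup R_i$ has length exactly $i$, and each remaining branch produces a collision witnessed by words of strictly smaller total length, so the induction closes. (In fact the two-sided case is slightly over-engineered: whenever $l_v'$ is nonempty it is already a proper prefix of $u$ and a proper suffix of $l_v$, so the pair $(u,l_v)$ settles all three sub-cases at once.) What your approach buys is brevity, self-containment, and a clean isolation of the two hypotheses actually needed; what the decomposition-sequence approach buys is a reusable tool, since the same $(p_m(w))$ apparatus is invoked again in Section~\ref{section:constructions} to characterise maximal $(1,k)$-overlap-free codes and to prove Proposition~\ref{proposition:maximal}, where a one-off induction would not suffice. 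Your closing observation that Proposition~\ref{proposition:unique_index} can then be deduced as a corollary is also accurate.
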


\begin{proposition}\label{proposition:unique_index}
  \cite{stanovnik:2024}
  Let $C = \bigcup_{i\in\left[n-1\right]} L_i R_{n-i}$ be a code obtained from Construction~\ref{construction:fimmel}.
  For every word $w \in C$, there exists a unique index $i$ such that $w \in L_iR_{n-i}$.
\end{proposition}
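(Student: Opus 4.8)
The plan is to prove Proposition~\ref{proposition:unique_index}, which asserts that every codeword $w \in C = \bigcup_{i\in[n-1]} L_i R_{n-i}$ lies in exactly one of the sets $L_i R_{n-i}$. Existence of at least one such index is immediate from the definition of $C$ as a union, so the entire content of the statement is \emph{uniqueness}. First I would set up a proof by contradiction: suppose $w \in L_i R_{n-i}$ and $w \in L_j R_{n-j}$ for two distinct indices $i < j$. This gives two factorisations $w = \ell_i r_{n-i} = \ell_j r_{n-j}$ with $\ell_i \in L_i$, $r_{n-i} \in R_{n-i}$, $\ell_j \in L_j$, $r_{n-j} \in R_{n-j}$, where $|\ell_i| = i < j = |\ell_j|$.

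The key step is to compare the two factorisations positionally. Since $i < j$, the prefix $\ell_j$ of length $j$ contains $\ell_i$ as its own length-$i$ prefix; writing $\ell_j = \ell_i m$ for the intermediate factor $m$ of length $j - i$, and noting that the common suffix structure forces $r_{n-i} = m\, r_{n-j}$, so that $r_{n-i}$ has the word $r_{n-j}$ as a proper suffix (as $n - j < n - i$). The strategy is then to invoke Proposition~\ref{proposition:z}, which packages the building blocks into the single collection $X = \bigcup_{i\in[2n+1]} X_i$ with $X_{2i-1} = L_i$, $X_{2i} = R_i$, and $X_{2n+1} = C$, and guarantees that no proper prefix occurring in $X$ is simultaneously a proper suffix in $X$. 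The plan is to exhibit a word that is both a proper prefix and a proper suffix of members of $X$, contradicting this no-overlap property. Concretely, $\ell_i \in L_i = X_{2i-1}$ is a proper prefix of $\ell_j \in L_j = X_{2j-1}$ (since $i < j \le n-1$), while at the same time $r_{n-j} \in R_{n-j} = X_{2(n-j)}$ occurs as a proper suffix of $r_{n-i} \in R_{n-i} = X_{2(n-i)}$; either of these witnesses a forbidden prefix/suffix coincidence once the factorisation relations are combined, yielding the contradiction.

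I expect the main obstacle to be bookkeeping rather than conceptual: one must argue carefully that a genuine \emph{proper} prefix–suffix overlap is produced, so that Proposition~\ref{proposition:z} applies (the proposition excludes only \emph{proper} occurrences, so I must rule out the degenerate possibility that the overlapping factor is the whole word or empty, which is where the strict inequality $i < j$ and the non-triviality $|m| = j - i \ge 1$ are used). The cleanest route is likely to track the word $m$ of length $j-i$: it is a proper suffix of $\ell_j$ and, via $r_{n-i} = m\,r_{n-j}$, the prefix $m$ of $r_{n-i}$ coincides with a suffix of the longer left-factor, so that some nonempty word simultaneously appears as a proper prefix of one element of $X$ and a proper suffix of another. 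Once that overlap is pinned down, Proposition~\ref{proposition:z} delivers the contradiction immediately, and uniqueness follows.
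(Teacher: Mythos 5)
Your argument is correct, and the decisive step is the one you isolate only in your final paragraph: the middle factor $m$ with $\ell_j=\ell_i m$ and $r_{n-i}=m\,r_{n-j}$ is a nonempty proper suffix of $\ell_j\in L_j\subseteq X$ (since $1\le j-i<j$) and a nonempty proper prefix of $r_{n-i}\in R_{n-i}\subseteq X$ (since $j-i<n-i$, because $j\le n-1$), which is exactly the configuration Proposition~\ref{proposition:z} forbids. Be aware, though, that the claim in your second paragraph that ``either of these witnesses a forbidden prefix/suffix coincidence'' is not right as stated: $\ell_i$ being a proper prefix of $\ell_j$, or $r_{n-j}$ being a proper suffix of $r_{n-i}$, is on its own harmless, because Proposition~\ref{proposition:z} excludes only a single word that is \emph{simultaneously} a proper prefix of some member of $X$ and a proper suffix of some member of $X$, and you have no grounds to assert that $\ell_i$ is a suffix of anything in $X$ or that $r_{n-j}$ is a prefix of anything in $X$. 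The proof therefore rests entirely on the $m$-witness, and that should be the argument, not a fallback. On the question of comparison: this paper does not prove Proposition~\ref{proposition:unique_index} at all --- it is imported from \cite{stanovnik:2024}, where it is established via the sequence-of-decompositions machinery --- so there is no in-paper proof to match against; your direct deduction of uniqueness from Proposition~\ref{proposition:z} is the natural self-contained route and, once tightened as above, is sound.
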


Both claims were proven in \cite{stanovnik:2024} using a sequence of decompositions of words that was initially defined by Fimmel et al. in \cite{fimmel:2019}. We will use a generalisation of this approach in this article. For a family of pairs of sets $(L_i,R_i)_{i<k}$ that satisfy the same requirements as before, define $L \coloneqq \bigcup_{i < k} L_i$ and $R \coloneqq \bigcup_{i < k} R_i$.
For $w \in L_1 (L_1 \cup R_1)^* R_1$ 
define $(p_m(w))_{m \geq 0}$ to be a sequence of decompositions of $w$ in $(L \cup R)^+$ encoded with a binary word $p_m$ over the alphabet $\{\text{l},\text{r}\}$ such that
        $p_0(w) \in \text{l}\{\text{l},\text{r}\}^{n-2}r$ with $p_0(w)_i = \text{l}$ if $w_i \in L_1$ and $p_0(w)_i = \text{r}$ if $w_i \in R_1$.
        If $p_m(w)$ contains a subword lr that corresponds to a subword of $w$ of length at most $k$, then $p_{m+1}(w) \in \{\text{l},\text{r}\}^+$ is obtained by replacing each occurrence of lr in $p_m(w)$ by l if lr corresponds to a subword $w'$ of $w$ in $L_iR_j\subseteq L_{i+j}$ for some positive integers $i$ and $j$ such that $i+j \leq k$, or by r if it corresponds to a subword $w'$ of $w$ in $L_iR_j \subseteq R_{i+j}$ for some positive integers $i$ and $j$ such that $i+j\leq k$. Occurrences of lr in $p_m(w)$ that correspond to subwords longer than $k$ are not modified.
The sequence $(p_m(w))_{m\geq 0}$ is finite, since, whenever defined, the decomposition $p_{m+1}(w)$ is strictly shorter than decomposition $p_m(w)$. Below, we provide an example to illustrate how this sequence of decompositions is performed.

\begin{example}
    Let $k=2$, $\Sigma = \{0,1,2\}$, and $(L_1,R_1) = (\{0,1\},\{2\})$, $(L_2,R_2) = (\{02\}, \{12\})$. Verifying that the family of pairs of sets satisfies the required conditions is easy. We will perform the sequence of decompositions $(p_m(w))_{m \geq 0}$ on the word $02122$. It is a word over $\Sigma$, it starts with a letter from $L_1$ and ends with a letter from $R_1$, so the first element of the sequence $p_0(02122)$ is defined. We obtain it by replacing $0$'s and $1$'s with l (they belong to $L_1$) and $2$'s with r (it belongs to $R_1$). Hence, $p_0(02122) = \text{lrlrr}$.
    To compute $p_1(02122)$, we first observe that there are two lr's in $p_0(02122)$. The first one corresponds to the subword 02, and the second one corresponds to the subword 12. The subword 02 is an element of $L_2$, so we replace it with l. The subword 12 is an element of $R_2$; thus, it is replaced with r. Therefore, $p_1(02122) = \text{lrrr}$. There is one lr is $p_1(02122)$. It corresponds to the subword 0212, which is longer than $k$. Therefore, $p_2(02122)$ is not defined.
\end{example}

\section{Upper bounds}\label{section:upper}
In this section, we determine new upper bounds on the size of $(t_1,t_2)$-overlap-free codes.
We first observe that the arguments from \cite{cai:2023} on maximum $(1,k)$-overlap-free codes for $k < n/2$ hold also for $(t_1,t_2)$-overlap-free codes with $t_2 < n/2$. In particular, every $(t_1,t_2)$-overlap-free code $C$ of length $2t_2$ can be extended into a $(t_1,t_2)$-overlap-free code $X = \{x_1\cdots x_{t_2} \Sigma^{n-2t_2} x_{t_2+1}\cdots x_{2t_2} \mid x_1\cdots x_{2t_2} \in C\}$ of length $n > 2t_2$, and removing the letters at positions $t_2+1$ to $n-t_2$ from every word in a $(t_1,t_2)$-overlap-free code of length $n>2t_2$ yields a $(t_1,t_2)$-overlap-free code of length $2t_2$. Hence, the following equality holds.

\begin{proposition}\label{proposition:generalise_cai}
    If $n > 2t_2$,
    \begin{align*}
        S_{t_1}^{t_2}(q,n) = q^{n-2t_2} S_{t_1}^{t_2}(q,2t_2).
    \end{align*}
\end{proposition}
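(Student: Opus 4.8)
The plan is to prove the equality by exhibiting two size-preserving maps between maximum $(t_1,t_2)$-overlap-free codes of length $n$ and length $2t_2$, thereby establishing matching inequalities in both directions. The authors have already sketched the two constructions in the paragraph preceding the statement: an \emph{inflation} map that takes a code of length $2t_2$ and inserts a free middle block $\Sigma^{n-2t_2}$, and a \emph{projection} map that deletes the middle positions $t_2+1,\dots,n-t_2$. First I would formalise inflation: given $C \subseteq \Sigma^{2t_2}$ that is $(t_1,t_2)$-overlap-free, define $X = \{x_1\cdots x_{t_2}\,y\,x_{t_2+1}\cdots x_{2t_2} \mid x_1\cdots x_{2t_2} \in C,\ y \in \Sigma^{n-2t_2}\}$. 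I would show $|X| = q^{n-2t_2}|C|$ (distinctness is immediate since the map recovers $y$ and the original word from any element of $X$), and that $X$ is $(t_1,t_2)$-overlap-free.

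The crux of both directions is the same observation, which I would isolate as the key lemma: because $t_2 < n/2$, any $t$-overlap with $t_1 \le t \le t_2$ involves only a prefix of length $t \le t_2$ and a suffix of length $t \le t_2$ of the words concerned, and these initial $t_2$ and final $t_2$ symbols are left completely untouched by both inflation and projection. Concretely, the $t$-letter prefix of any word in $X$ equals the $t$-letter prefix of the corresponding word in $C$ (since $t \le t_2$ and the first $t_2$ coordinates agree), and likewise the $t$-letter suffix of a word in $X$ equals the $t$-letter suffix of its preimage in $C$ (since $t \le t_2$ and the last $t_2$ coordinates agree, the inserted middle block sitting strictly between positions $t_2$ and $n-t_2+1$). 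Hence a $t$-overlap exists between two words of $X$ if and only if one exists between their preimages in $C$; overlap-freeness transfers verbatim. This gives $S_{t_1}^{t_2}(q,n) \ge q^{n-2t_2} S_{t_1}^{t_2}(q,2t_2)$.

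For the reverse inequality I would run the projection in the other direction. Given a $(t_1,t_2)$-overlap-free $Y \subseteq \Sigma^n$, let $\pi(Y) \subseteq \Sigma^{2t_2}$ delete coordinates $t_2+1,\dots,n-t_2$. By the same prefix/suffix invariance, $\pi(Y)$ is again $(t_1,t_2)$-overlap-free. The subtlety here is that $\pi$ need not be injective on an arbitrary code, so I would instead argue at the level of \emph{maximum} codes: fix a maximum $Y$ and partition it by the value of its middle block; at least one block has size at least $|Y|/q^{n-2t_2}$, and $\pi$ restricted to that block is injective (two words with equal middle block that project to the same $2t_2$-word are equal), yielding a $(t_1,t_2)$-overlap-free code of length $2t_2$ of size at least $S_{t_1}^{t_2}(q,n)/q^{n-2t_2}$. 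This gives the matching inequality $S_{t_1}^{t_2}(q,n) \le q^{n-2t_2} S_{t_1}^{t_2}(q,2t_2)$, and combining the two completes the proof.

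I expect the main obstacle to be the bookkeeping in the reverse direction rather than any genuine difficulty: one must handle the non-injectivity of the raw projection cleanly via the pigeonhole/middle-block argument, and one must be careful that the prefix/suffix invariance is stated for \emph{both} orientations of the overlap (prefix of $u$ in suffix of $v$ \emph{and} prefix of $v$ in suffix of $u$) and for the case $u=v$, since the definition permits a word to overlap itself. Once the key invariance lemma is stated with the strict inequality $t_2 < n/2$ making it explicit that the untouched prefix and suffix windows do not collide, both inequalities follow mechanically.
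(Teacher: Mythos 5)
Your proposal is correct and follows essentially the same route as the paper, which establishes the equality via exactly the inflation map $x_1\cdots x_{t_2}\,\Sigma^{n-2t_2}\,x_{t_2+1}\cdots x_{2t_2}$ and the deletion of the middle coordinates, both justified by the observation that overlaps of length at most $t_2 < n/2$ depend only on the first and last $t_2$ symbols. Your pigeonhole handling of the non-injective projection is a minor bookkeeping variant of the paper's counting (which instead bounds the original code by the inflation of its projection), but the substance is identical.
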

Now, let us determine an upper bound for the maximum size of the $(t_1,t_2)$-overlap-free codes satisfying $t_2 \geq n/2$. The idea behind its proof comes from the upper bound on the size of non-overlapping codes given in \cite{Blackburn:2015}.

\begin{proposition}
    \begin{align*}
        S_{t_1}^{t_2}(q,n) < \frac{q^n}{\Bigl\lfloor \frac{2n-t_1}{n-t_2} \Bigr\rfloor}.
    \end{align*}
\end{proposition}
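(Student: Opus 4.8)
The plan is to bound the size of a $(t_1,t_2)$-overlap-free code $C \subseteq \Sigma^n$ by a counting argument on extended words, mimicking the approach used for non-overlapping codes in \cite{Blackburn:2015}. The key idea is this: for each codeword $c \in C$ and each shift, we consider words obtained by overlapping copies of $c$ (or concatenating $c$ with appropriate padding) and argue that the total collection of such associated words, taken over all $c \in C$ and all shifts, consists of distinct words lying in a space of bounded size. Concretely, set $m \coloneqq \bigl\lfloor \tfrac{2n-t_1}{n-t_2}\bigr\rfloor$; the aim is to associate to each codeword roughly $m$ distinct words of some fixed length $N$ over $\Sigma$, so that $m\,|C| \leq q^{N}$ and the strict inequality follows by checking that not every word of length $N$ is hit.

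First I would fix the length of the ambient word space. A natural choice is to look at words of the form $c$ together with $m$ evenly spaced shifts of period $(n-t_2)$, since the quantity $n-t_2$ is exactly the gap by which a non-permitted overlap would let one codeword slide against another. The value $m = \bigl\lfloor \tfrac{2n-t_1}{n-t_2}\bigr\rfloor$ should arise as the largest number of shifts, each of size $n-t_2$, that fit inside a window of width $2n-t_1$; this window width reflects that overlaps of every length from $t_1$ up to $t_2$ are forbidden, so the ``forbidden correlation zone'' of a codeword against shifted copies of codewords spans positions determined by $t_1$ and $t_2$. I would make this precise by defining, for $0 \leq j < m$, the word $w_j(c)$ obtained by placing $c$ at offset $j(n-t_2)$ inside a fixed-length template and filling the remaining positions from the corresponding shifted codeword, then argue these templates are well-defined words of a common length $N$.

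The heart of the argument is injectivity: I would show that if $w_j(c) = w_{j'}(c')$ for $(c,j) \neq (c',j')$, then the overlap structure of $c$ and $c'$ would exhibit a $t$-overlap for some $t \in [t_1,t_2]$, contradicting the $(t_1,t_2)$-overlap-freeness. This is where Propositions~\ref{proposition:z} and the general reasoning about prefixes occurring as suffixes guide the intuition, though here the contradiction should follow directly from the definition of the overlap property: a coincidence of two shifted templates forces a prefix of one codeword to coincide with a suffix of another at a shift lying in the prohibited range $[t_1,t_2]$. Once injectivity is established, each codeword contributes $m$ distinct words, giving $m\,|C| \leq q^{N}$, and I would identify $N$ so that this reads as $|C| \leq q^{n}/m$; the strictness comes from exhibiting at least one word of length $N$ not of the form $w_j(c)$ (for instance, a constant word, which cannot be a shifted overlap of codewords in a nontrivial code).

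The main obstacle I anticipate is pinning down the exact window width $2n - t_1$ and the common template length $N$ so that the floor $\bigl\lfloor \tfrac{2n-t_1}{n-t_2}\bigr\rfloor$ emerges cleanly, and simultaneously verifying that the $m$ shifted placements remain pairwise non-coinciding \emph{for distinct codewords as well}, not merely for a single codeword shifted against itself. Handling the interaction between two different codewords $c \neq c'$ requires care: the forbidden-overlap condition is symmetric and ranges over an interval of lengths, so I must check that every shift within the window triggers a forbidden overlap length, rather than only the extreme shifts. Getting the boundary cases of the floor function right, and confirming the inequality is strict rather than merely $\leq$, will be the most delicate bookkeeping.
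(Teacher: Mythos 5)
Your plan follows the same route as the paper: a double count over words of length $2n-t_1$ that contain a codeword starting at an offset that is a multiple of $n-t_2$, uniqueness of that offset via the forbidden-overlap condition, and constant words to make the inequality strict. All the right ingredients are named, but as written the counting does not close, and this is a genuine gap rather than deferred bookkeeping. Your stated inequality $m\,\lvert C\rvert \leq q^{N}$ with templates of length $N = 2n-t_1$ yields only $\lvert C\rvert \leq q^{2n-t_1}/m$, which is weaker than the trivial bound $q^{n}$ for essentially all parameters. To reach $q^{n}/m$ you must let the $n-t_1$ positions \emph{not} covered by the embedded codeword range freely over $\Sigma$, so that each pair (codeword, offset) contributes $q^{n-t_1}$ distinct words and the double count reads $m\,\lvert C\rvert\, q^{n-t_1} = \lvert X\rvert < q^{2n-t_1}$; this is exactly what the paper does. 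Relatedly, filling the remaining positions ``from the corresponding shifted codeword'' does not even produce well-defined templates: two copies of $c$ placed at relative offset $j(n-t_2)$ would have to agree on $n-j(n-t_2)$ positions, which is precisely a self-overlap of a length you are about to prove forbidden.

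The injectivity step you postpone is the actual content of the proof and is short: if codewords $u$ and $v$ start at offsets $k_1(n-t_2)$ and $k_2(n-t_2)$ with $k_1<k_2$ inside a word of length $2n-t_1<2n$, they overlap in exactly $l=n-(k_2-k_1)(n-t_2)$ letters; $l>t_2$ would force $k_2=k_1$, and $l<t_1$ would force the second copy to run past position $2n-t_1$, so $l\in\left[t_1,t_2\right]$, a contradiction. Finally, the floor-function boundary issue you flag is real and must be resolved before the stated constant can be claimed: the number of admissible offsets $k\geq 0$ with $k(n-t_2)+n\leq 2n-t_1$ is $\bigl\lfloor \frac{n-t_1}{n-t_2}\bigr\rfloor+1$, which does not in general coincide with $\bigl\lfloor \frac{2n-t_1}{n-t_2}\bigr\rfloor$ (for instance when $t_2\geq n/2$ the latter is strictly larger), so the denominator your argument actually supports is the former.
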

\begin{proof}
    Let a code $C$ be $(t_1,t_2)$-overlap-free. Construct a set $X$ consisting of all the words $x_1 \cdots x_{2n-t_1}$ from $\Sigma^{2n-t_1}$ for which there exists a non-negative integer $k$ such that \\ $x_{k(n-t_2)+1}\cdots x_{k(n-t_2)+n}$ is a word in $C$. 
    We will show that for every word in $X$, there is exactly one such $k$. Since there are $\Bigl\lfloor \frac{2n-t_1}{n-t_2} \Bigr\rfloor$ possible values of $k$, this will imply 
    \begin{align*}
         \lvert X \rvert = \lvert C \rvert \biggl\lfloor \frac{2n-t_1}{n-t_2} \biggr\rfloor q^{n-t_1}.
    \end{align*}
    Assume that for some word $x_1\cdots x_{2n-t_1} \in X$ there exist two (not necessarily distinct) words $u$ and $v$ from $C$ that start at distinct positions of $x$. Denote these positions with $k_1(n-t_2)+1$ and $k_2(n-t_2)+1$. Without loss of generality, we can assume $k_1 < k_2$. Since the length of $x$ is shorter than $2n$, an overlap exists between $u$ and $v$. Its length equals $l= n - (k_2-k_1)(n-t_2)$.
    We need to explain that $t_1 \leq l \leq t_2$, contradicting the fact that $C$ is $(t_1,t_2)$-overlap-free.
    If $l$ was larger than $t_2$, it would also hold that
    \begin{align*}
        n-(k_2-k_1)(n-t_2) &> t_2\\
        t_2 (k_2-k_1-1) &> n (k_2-k_1-1) \\
        t_2 &> n,
    \end{align*}
    which, by definition, is not true. If, on the other hand, $l$ was shorter than $t_1$, the length of $x_{k_1 t_1 + n + 1} \cdots x_{2n-t_1}$ should be longer than $n-t_1$, which again cannot hold. To complete the proof, we only need an upper bound on the size of $X$. No constant word can be in $X$, so we simply take $\lvert X \rvert \leq q^{2n-t_1} - q < q^{2n-t_1}$, and the proposition follows.
\end{proof}

Providing a tighter bound for $(1,k)$-overlap-free codes for some $k \geq n/2$ is easy. In particular, any $(1,k)$-overlap-free code is also $(1,\lfloor n/2\rfloor)$-overlap-free, so
\begin{align*}
    S_1^k(q,n) \leq S_1^{\lfloor n/2 \rfloor}(q,n) \leq \frac{q^n - q^{\lceil n/2\rceil}}{2\lfloor n/2 \rfloor},
\end{align*}
by Equation~\eqref{eq:3}. We provide a tighter upper bound.

\begin{proposition}
    If $k \geq n/2$, 
    $S_1^k(q,n) \leq \sum_{d \mid n} \frac{\mu(d)}{n} q^{n/d}$.
\end{proposition}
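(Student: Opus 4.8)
The plan is to recognise the right-hand side as $P_q(n)/n$, the number of primitive (aperiodic) necklaces of length $n$ — equivalently the number of Lyndon words — and to exhibit an injection from any $(1,k)$-overlap-free code $C$ with $k \geq n/2$ into this set of necklaces. Concretely, I would send each codeword to its cyclic-rotation class and argue both that the codewords used are primitive and that no rotation class is hit twice.

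First I would show that every codeword is primitive. If some $u \in C$ were a proper power, say $u = z^m$ with $m \geq 2$ and $d \coloneqq \lvert z \rvert = n/m$, then the length-$d$ prefix and the length-$d$ suffix of $u$ both equal $z$, so $u$ has a $d$-overlap with itself. Since $m \geq 2$ forces $d \leq n/2 \leq k$, this self-overlap has length in $[1,k]$, contradicting that $C$ is $(1,k)$-overlap-free. Hence every codeword is primitive, and so possesses exactly $n$ distinct cyclic rotations, all themselves primitive.

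Next I would show that no two distinct codewords are cyclic rotations of one another. Writing $u = w_1 \cdots w_n$ and its rotation by $i$ positions as $v = w_{i+1}\cdots w_n w_1 \cdots w_i$ for some $1 \leq i \leq n-1$, the length-$i$ prefix of $u$ equals the length-$i$ suffix of $v$ (both are $w_1\cdots w_i$), while the length-$(n-i)$ prefix of $v$ equals the length-$(n-i)$ suffix of $u$ (both are $w_{i+1}\cdots w_n$). Thus $u$ and $v$ have both an $i$-overlap and an $(n-i)$-overlap. Because $k \geq n/2$, the smaller of $i$ and $n-i$ lies in $[1,\lfloor n/2\rfloor] \subseteq [1,k]$, so at least one of these two overlaps is forbidden; hence $u$ and $v$ cannot both lie in $C$.

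Finally I would combine the two facts. The primitive words of $\Sigma^n$ partition into cyclic-rotation classes of size exactly $n$, and there are $P_q(n)/n$ of them. By the first step every codeword is primitive, and by the second step each such class meets $C$ in at most one word, so $\lvert C \rvert \leq P_q(n)/n = \sum_{d\mid n}\frac{\mu(d)}{n}\,q^{n/d}$. I expect the step requiring the most care to be the overlap computation for rotations, together with the observation that the hypothesis $k \geq n/2$ is exactly what guarantees that one of the two induced overlaps falls in the forbidden range $[1,k]$; the primitivity argument and the concluding necklace count are then routine.
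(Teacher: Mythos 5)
Your proposal is correct and follows essentially the same route as the paper: both establish primitivity of codewords from the forbidden short self-overlap of a proper power, then show that distinct cyclic shifts would force an $i$- or $(n-i)$-overlap with $\min\{i,n-i\}\leq k$, and conclude by counting that $n\lvert C\rvert \leq \sum_{d\mid n}\mu(d)q^{n/d}$. The only cosmetic difference is that you phrase the count as an injection into Lyndon-word classes, whereas the paper counts the $n$ pairwise distinct shifts of all codewords directly.
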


\begin{proof}
    As explained in \cite{cai:2023}, if $C$ is a $(1,k)$-overlap-free code for some $k \geq n/2$, all words and cyclic shifts of words in $C$ are primitive. More specifically, if a cyclic shift of $w \in C$ had a period $d < n$, i.e., $w_{i+1}\cdots w_n w_1\cdots w_i = (v_1 \cdots v_d)^{n/d}$, there would exist $m \in \left[d\right]$ such that $w_1 = v_m$. Then 
    \begin{align*}
        w_1 \cdots w_{d} = v_m \cdots v_d v_1 \cdots v_{m-1} = w_{n-d+1}\cdots w_n,
    \end{align*}
    so $w$ would have a $d$-overlap with itself.
    Since a necessary condition for $d$ to be a period is that $d$ divides $n$, $d \leq n/2 \leq k$, and this contradicts the fact that $C$ is $(1,k)$-overlap-free.
    We will now explain that all the cyclic shifts are distinct when $k \geq n/2$.
    Suppose that in $C$, there exist words $u$ and $v$, such that the shift of $u$ for $i$ positions to the left and the shift of $v$ for $j$ positions to the left are the same, i.e.,
    \begin{align*}
        u_{i+1}\cdots u_n u_1 \cdots u_i = v_{j+1}\cdots v_n v_1 \cdots v_j.
    \end{align*}
    Without loss of generality we can assume $j \geq i$. If $j = i$, then also $u=v$, and the equality has to hold. If $j > i$, we can distinguish two cases, depending on the value of $j-i$. If $j-i \leq k$, $v_j \cdots v_{j-i}$ is a suffix of $u$. Then $u$ and $v$ have a $j-i$-overlap, but this contradicts the fact that $C$ is $(1,k)$-overlap-free.
    If $j-i > k$, then $u_1\cdots u_{n-j+i}$ is a suffix of $v$. We observe that the length of this overlap is smaller than $k$. In particular, $k \geq n/2$ implies $n-j+i < n-k \leq k$. Thus, all the shifts must be distinct. Hence, $n \lvert C  \rvert \leq \sum_{d \mid n}{\mu(d)} q^{n/d}$.
\end{proof}

\section{Constructions}\label{section:constructions}
This section gives constructions for $(t_1,t_2)$-overlap-free codes.
We first generalise the idea of Construction~\ref{construction:fimmel} to construct $(1,k)$-overlap-free codes. Then, we show that for $k \geq n/2 $, the construction yields all maximal $(1, k)$-overlap-free codes and determine sufficient conditions for its maximality.

\begin{construction}\label{(1,k)-overlap}
    Let $(L_1, R_1)$ be a partition of $\Sigma$ into two non-empty parts, and for every $i \in \left[2,k\right]$, $(L_i, R_i)$ a partition of $\bigcup_{j=1}^{i-1} \left(L_j R_{i-j} \right)$.
    Then
    \begin{align*}
        C = \bigcup_ {\substack{s \in \left[k+1,n\right] \\ j \in \left[s-k,k\right]\\\alpha \emph{ composition of } n-s\\i \in \left[0,par(\alpha)\right]}} L_{\alpha_1} \cdots L_{\alpha_i}L_jR_{s-j}R_{\alpha_{i+1}}\cdots R_{\alpha_{par(\alpha)}},
    \end{align*}
    is a $(1,k)$-overlap-free code.
\end{construction}

\begin{proof}
Any prefix $u'$ of $u \in C$ of length at most $k$ is a prefix in the set $L_{\alpha_1} \cdots L_{\alpha_{i_1}}L_{j_1}R_{s_1-j_1}$. Any suffix $v'$ of $v \in C$ of length at most $k$ is a suffix in the set $L_{j_2}R_{s_2-j_2}R_{\beta_{i_2+1}}\cdots R_{\beta_{par(\beta)}}$. If $u' = v'$, then there is a suffix in $R_{\beta_{par(\beta)}}$ that occurs either as a prefix in $R_{{s_1}-j_1}$ or as a prefix or a word in some $L_i$ for $i \in \{\alpha_1,\dots,\alpha_{i_1},j_1\}$. This cannot happen due to Proposition~\ref{proposition:z}.
\end{proof}

In the proof of Corollary~\ref{size:1k} below, we will see that Proposition~\ref{proposition:z} asserts that for every word in a code from Construction~\ref{(1,k)-overlap} there is a unique composition $\alpha$, and indices $i$ and $j$ that determine the codeword. Hence, the size of the code can be computed as follows.

\begin{corollary}\label{size:1k}
    Let $C$ be a code obtained from Construction~\ref{(1,k)-overlap}. Then
    \begin{align*}
        \lvert C \rvert = \sum_{\substack{s \in \left[k+1,n\right] \\ j \in \left[s-k,k\right]\\\alpha \emph{ composition of } n-s\\i \in \left[0,par(\alpha)\right]}} \lvert L_{\alpha_1}  \rvert \cdots \lvert L_{\alpha_i} \rvert \lvert L_j \rvert \lvert R_{s-j} \rvert  \lvert R_{\alpha_{i+1}} \rvert \cdots  \lvert R_{\alpha_{par(\alpha)}} \rvert.
    \end{align*}
\end{corollary}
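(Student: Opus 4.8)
The plan is to reduce the identity to a single \emph{unique factorisation} claim and then count. Concretely, I would show that every $w \in C$ lies in the set indexed by exactly one tuple $(s,j,\alpha,i)$ and, for that tuple, is obtained from exactly one choice of a word out of each of the factors $L_{\alpha_1},\dots,L_{\alpha_i},L_j,R_{s-j},R_{\alpha_{i+1}},\dots,R_{\alpha_{par(\alpha)}}$. This single statement delivers both halves of the count at once: the sets indexed by distinct tuples are then pairwise disjoint, and for a fixed tuple the concatenation map out of the product $L_{\alpha_1}\times\cdots\times R_{\alpha_{par(\alpha)}}$ is injective, so that term has size exactly $\lvert L_{\alpha_1}\rvert\cdots\lvert R_{\alpha_{par(\alpha)}}\rvert$. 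Summing over tuples gives the displayed formula, so everything rests on uniqueness.

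To prove uniqueness I would canonically reconstruct the factorisation of a given $w$ using the generalised sequence of decompositions $(p_m(w))_{m\ge 0}$ from Section~\ref{section:preliminaries}, mirroring the way Proposition~\ref{proposition:unique_index} was established for Construction~\ref{construction:fimmel}. First I would note that $p_0(w)$ is defined for every $w\in C$: a word in any $L$-set begins with an $L_1$-letter and a word in any $R$-set ends with an $R_1$-letter, so $w$ begins in $L_1$ and ends in $R_1$, i.e.\ $w \in L_1(L_1\cup R_1)^{*}R_1$. Next I would observe that the sequence is deterministic, since within any $p_m(w)$ the occurrences of $\text{lr}$ are pairwise disjoint, and each collapsible $\text{lr}$ corresponds to a subword in some $L_iR_j$, which the partition hypothesis places in exactly one of $L_{i+j}$ or $R_{i+j}$. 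Hence the sequence terminates at a well-defined last element $\bar p(w)$.

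I would then read the tuple off $\bar p(w)$. The claim is that $\bar p(w)$ has the form $\text{l}^{a}\text{r}^{b}$ with a single $\text{lr}$-transition, whose two neighbouring symbols mark the over-length middle factor $L_jR_{s-j}$ of length $s>k$ that the construction forbids from collapsing. Granting this, the symbols of $\bar p(w)$ cut $w$ into contiguous blocks of determined lengths: the first $a-1$ of them are the words from $L_{\alpha_1},\dots,L_{\alpha_i}$ with $i=a-1$, the next two are the words from $L_j$ and $R_{s-j}$ (so $j$, $s-j$ and hence $s$ are recovered), and the remaining blocks are the words from $R_{\alpha_{i+1}},\dots,R_{\alpha_{par(\alpha)}}$; the composition $\alpha$ is the resulting list of block lengths. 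Since $\bar p(w)$ and the block lengths depend on $w$ alone, the tuple and the letters in each factor are forced, which is exactly the uniqueness required.

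The hard part will be justifying that the deterministic reduction never merges material belonging to two different intended factors, so that $\bar p(w)$ genuinely reflects the factor structure rather than some accidental regrouping; this is precisely where Proposition~\ref{proposition:z} enters. A boundary-crossing collapse would force a proper prefix of some $L$- or $R$-set word to occur as a proper suffix of another, which Proposition~\ref{proposition:z} forbids; I would therefore argue inductively that at every stage each intended factor contracts inside its own span, each $L$-factor shrinking to a single $\text{l}$ and each $R$-factor to a single $\text{r}$, leaving only the middle $\text{lr}$ uncollapsed. The few degenerate configurations, such as $k=1$ or a length-two middle factor, I would dispatch separately. With boundary-crossing excluded, the read-off of the previous paragraph is valid and the size formula follows.
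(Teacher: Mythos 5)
Your proposal is correct in outline but takes a genuinely different route from the paper. The paper proves uniqueness by a direct comparison of two hypothetical factorisations of the same word $w$: it scans the $L$-parts from the left to the first index where the two compositions disagree, derives a contradiction with Proposition~\ref{proposition:z} in each case, then shows the number of leading $L$-parts agrees, and finally repeats the scan from the right for the $R$-parts. You instead build a canonical normal form via the reduction sequence $(p_m(w))_{m\ge 0}$ and argue that every admissible factorisation must be the one read off the terminal decomposition; this mirrors how the paper later proves that Construction~\ref{(1,k)-overlap} yields all maximal codes, and how Proposition~\ref{proposition:unique_index} is established in the cited prior work. Your key reduction is sound: a collapse of an occurrence of lr produces a word of $X$ of length at most $k$, so if either half of that occurrence were a proper suffix (resp.\ proper prefix) of an intended factor or sub-factor word, it would simultaneously be a proper prefix (resp.\ proper suffix) of a word in $X$, contradicting Proposition~\ref{proposition:z}. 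Two points you should make explicit to close the argument: first, that a fully collapsed factor lying in some $R_t$ can only carry the symbol r (by disjointness of $L_t$ and $R_t$ and by length), so the only complete--complete lr adjacency is the middle pair $L_jR_{s-j}$, whose length $s>k$ blocks the collapse; second, that the reduction not only never crosses boundaries but actually terminates with each factor contracted to a single symbol --- if a factor's region still held two or more symbols, a deepest node of the corresponding decomposition tree present in the current antichain would have its sibling also present, yielding a collapsible lr, so the terminal decomposition could not yet have been reached. With those supplied, your normal-form argument delivers the same disjointness and injectivity as the paper's comparison argument; it costs more machinery but yields, as a by-product, an effective procedure for recovering the tuple $(s,j,\alpha,i)$ from a codeword.
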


\begin{proof}
    Suppose that for a word $w\in C$ there exist two compositions $\alpha$ and $\beta$, and integers $a$, $b$, $i$, $j$, such that
    \begin{align*}
        w \in L_{\alpha_1} \cdots L_{\alpha_a} L_{n-\lvert \alpha \rvert - i} R_i R_{\alpha_{a+1}}\cdots R_{\alpha_{par(\alpha)}} \cap L_{\beta_1} \cdots L_{\beta_b} L_{n-\lvert \beta \rvert - j} R_j R_{\beta_{b+1}}\cdots R_{\beta_{par(\beta)}}.
    \end{align*}
    We will show that then $\alpha = \beta$, $a = b$, and $i=j$.
    Suppose, for the sake of contradiction, a part exists on which $\alpha$ and $\beta$ differ. First, observe the case when they differ on some part with index at most $\min \{a,b\}$. Denote the smallest such index with $m$. Without loss of generality, we may assume $\alpha_m < \beta_m$. Depending on the size of $\beta_m$, there are two cases to be considered.\\
    If $\beta_m \leq \alpha_m + \cdots + \alpha_a$, there exists some part $\alpha_t$, such that some suffix of the subword of $w$ corresponding to the part $\beta_m$, i.e.,
    \begin{align*}
        u = w_{\beta_1 + \cdots + \beta_{m-1}+1} \cdots w_{\beta_1+\cdots+\beta_m} \in L_{\beta_m}
    \end{align*}
    is either a prefix or a word in $L_{\alpha_t}$, violating Proposition~\ref{proposition:z}. In the second case, when $\beta_m > \alpha_m + \cdots + \alpha_a$, we argue that there exists a suffix of $u$ that is a prefix in a word from $L_{n-\lvert \alpha \rvert -i} R_i$. For this, it is sufficient to observe that Construction~\ref{(1,k)-overlap} requires
    \begin{align*}
        \beta_1 + \cdots + \beta_b \leq n - k - 1.
    \end{align*}
    Therefore, the subword of $w$ corresponding to $L_{n-\lvert \alpha \rvert -i} R_i$ is strictly longer than $u$, since
    \begin{align*}
        n - \lvert \alpha \rvert - i + i = n - \lvert \alpha \rvert \geq k + 1 > n/2 > n - k - 1,
    \end{align*}
    and indeed there is a suffix of $u$ that is a prefix in a word from the set $L_{n-\lvert \alpha \rvert -i} R_i$, again violating Proposition~\ref{proposition:z}.
    Hence, $(\alpha_1, \cdots, \alpha_{\min\{a,b\}}) = (\beta_1, \cdots, \beta_{\min\{a,b\}})$.

    Now we are going to show that $a = b$. Without loss of generality assume $a \leq b$, and look at the suffix of the word $w$ that starts with a word from $L_{n-\lvert \alpha \rvert-i}$, i.e.,
    \begin{align*}
        v = w_{\alpha_1 + \dots + \alpha_{a}+1} \cdots w_n \in L_{n-\lvert \alpha \rvert-i}R_i R_{\alpha_{a+1}}\cdots R_{\alpha_{par(\alpha)}}.
    \end{align*}
    At the same time $v \in L_{\beta_{a+1}} \cdots L_{\beta_b} L_{n-\lvert \beta \rvert - j} R_j \cdots R_{\beta_{par(\beta)}}$. If $a \neq b$ or $n-\lvert \alpha \rvert - i \neq n - \lvert \beta \rvert - j$, it must hold that
    \begin{align*}
        \beta_{a+1}+\dots + \beta_b + n - \lvert \beta \rvert - j < n - \lvert \alpha \rvert -i.
    \end{align*}
    Otherwise, there is a suffix in $L_{n - \lvert \alpha \rvert -i}$ that is a word or a prefix in some $L_t$, which contradicts Proposition~\ref{proposition:z}. Hence, the part of the word $v$ that corresponds to $R_j$,
    \begin{align*}
        z = w_{\beta_1 + \cdots \beta_b + n-\lvert \beta  \rvert - j + 1}\cdots w_{\beta_1 + \cdots \beta_b + n-\lvert \beta \rvert},
    \end{align*}
    begins with a suffix from the set $L_{n - \lvert \alpha \rvert -i}$ which again contradicts Proposition~\ref{proposition:z}. Here, we must argue that the whole $z$ cannot be an inner word in $L_{n - \lvert \alpha \rvert -i}$ and cannot be a suffix in this set. To see this, we again use the fact from Construction~\ref{(1,k)-overlap} that the word corresponding to $L_{n-\lvert \beta \rvert -j} R_j$ is at least $k+1$ letters long, and therefore cannot be a subword in $L_{n-\lvert \alpha \rvert -i}$ with $n-\lvert \alpha \rvert -i \leq k$.
    
    Now, we may assume that $\alpha$ and $\beta$ disagree on $(\alpha_{a+1}, \cdots, \alpha_{par(\alpha)})$ and $(\beta_{b+1}, \cdots, \beta_{par(\beta)})$. In this case, we start comparing the parts from the right-most index and continue until we find the first one that differs. Following the same reasoning as before, we first observe that the last $\min \{par(\alpha)-a, par(\beta)-b\}$ parts of both compositions must agree, and then that $par(\alpha)-a = par(\beta)-b$. These observations complete the proof.
\end{proof}

\begin{proposition}
    The set of codes given by Construction~\ref{(1,k)-overlap} for $k\geq n/2$ contains all maximal $(1,k)$-overlap-free codes.
\end{proposition}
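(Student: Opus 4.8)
The plan is to realise any maximal $(1,k)$-overlap-free code $C$ as a code $C'$ of Construction~\ref{(1,k)-overlap}, for a family $(L_i,R_i)_{i\in[k]}$ extracted from $C$. I will prove the two inclusions $C\subseteq C'$ and $C'\subseteq C$, and the second one comes for free: as soon as I have a family with $C\subseteq C'$, Construction~\ref{(1,k)-overlap} guarantees that $C'$ is $(1,k)$-overlap-free, so $C\cup\{w\}\subseteq C'$ is $(1,k)$-overlap-free for every $w\in C'$; maximality of $C$ then forces $w\in C$, hence $C'\subseteq C$ and $C=C'$. The entire task is therefore to build the family and establish $C\subseteq C'$.

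To build the family, let $L_1$ be the set of letters that begin some codeword and $R_1$ the set of letters that end some codeword. A letter lying in both would create a $1$-overlap, so $L_1\cap R_1=\varnothing$; the two sets are non-empty because $C\neq\varnothing$, and any letter in neither may simply be thrown into $L_1$, making $(L_1,R_1)$ a partition of $\Sigma$. Recursively, for $i\in[2,k]$ and $M_i:=\bigcup_{j=1}^{i-1}L_jR_{i-j}$, I assign each $x\in M_i$ to $R_i$ if $x$ is a suffix of some codeword and to $L_i$ otherwise. Since a word of length $i\le k$ that is simultaneously a codeword-prefix and a codeword-suffix would give an $i$-overlap, the prefix and suffix roles are disjoint, so this really is a partition of $M_i$ and the family meets the hypotheses of Construction~\ref{(1,k)-overlap}. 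Assigning the ``ambiguous'' words arbitrarily to $L_i$ is deliberate: it completes the partition without extra work, and the decomposition argument below only ever needs to know the fate of the first and last blocks, which are honest prefixes and suffixes.

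For $C\subseteq C'$, fix $w\in C$ and run the decomposition sequence $(p_m(w))_{m\ge0}$ of Section~\ref{section:preliminaries}; it is well defined since $w\in L_1(L_1\cup R_1)^*R_1$ (here $L_1\cup R_1=\Sigma$) and since the partitions make every merge unambiguous, and it halts at a word $p_M(w)$ in which each surviving $\mathrm{lr}$ marks a subword of length $>k$. Let $w=b_1\cdots b_t$ be the corresponding block factorisation. I claim $p_M(w)=\mathrm{l}^{a}\mathrm{r}^{b}$. The first block $b_1$ is a prefix of $w$, hence lies in some $L_g$ (it cannot be a codeword-suffix without forcing a $g$-overlap with $g\le k$), so $p_M(w)$ begins with $\mathrm{l}$; dually $b_t$ is a suffix and $p_M(w)$ ends with $\mathrm{r}$. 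The crucial point, and the only place $k\ge n/2$ is used, is that $p_M(w)$ carries at most one $\mathrm{lr}$-descent: two descents would pin down two subwords of $w$ of length $\ge k+1$ each, and these subwords are disjoint (two \emph{adjacent} descents are impossible, as the shared block would have to belong to both an $L$-set and an $R$-set), so their combined length would be at least $2k+2>n$, a contradiction. A word over $\{\mathrm{l},\mathrm{r}\}$ that starts with $\mathrm{l}$, ends with $\mathrm{r}$ and has a single $\mathrm{lr}$-descent is exactly $\mathrm{l}^{a}\mathrm{r}^{b}$.

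Finally I read off the decomposition. The unique descent sits at the junction $b_a b_{a+1}$, a ``core'' belonging to $L_jR_{s-j}$ of length $s=|b_a|+|b_{a+1}|>k$ with $|b_a|,|b_{a+1}|\le k$, so $j\in[s-k,k]$ and $s\in[k+1,n]$; the blocks $b_1,\dots,b_{a-1}$ form the leading $L$-part and $b_{a+2},\dots,b_t$ the trailing $R$-part, and their lengths constitute a composition $\alpha$ of $n-s$. This places $w$ in the union defining $C'$, so $C\subseteq C'$ and, by the reduction above, $C=C'$. The one genuinely delicate step is the descent bound: I must carefully exclude the adjacent-descent case through the partition property and check that the residual blocks assemble into a composition with all parts in the ranges prescribed by Construction~\ref{(1,k)-overlap}. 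I do not expect to invoke Proposition~\ref{proposition:z} or the uniqueness half of Corollary~\ref{size:1k}, since maximality alone delivers the reverse inclusion.
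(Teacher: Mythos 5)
Your proof is correct and follows essentially the same route as the paper: extract the family $(L_i,R_i)_{i\le k}$ from $C$, run the decomposition sequence $(p_m(w))$, use $k\ge n/2$ to show the terminal decomposition has a single $\mathrm{l}\mathrm{r}$-descent and hence lies in $\mathrm{l}^+\mathrm{r}^+$, and obtain the reverse inclusion from maximality. The only differences are cosmetic (you assign by suffixes where the paper assigns by prefixes, and you place the ``neither'' letters in $L_1$ rather than $R_1$) plus a welcome extra line making explicit that two surviving descents correspond to \emph{disjoint} subwords, a point the paper leaves implicit.
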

\begin{proof}
    Let $C$ be a maximal $(1,k)$-overlap-free code.
    Use $C$ to construct sets
    \begin{align*}
        L_1 &\coloneqq \{x \in \Sigma \mid \text{there exists a word in $C$ that starts in $x$} \}, \\
        R_1 &\coloneqq \Sigma \setminus L_1,
    \end{align*}
    and for $i \in \{2, \dots, k\}$
    \begin{align*}
        L_i &\coloneqq \{x \in \bigcup_{j=1}^{i-1} L_jR_{i-j} \mid \text{there exists a word in $C$ that starts in $x$} \}, \\
        R_i &\coloneqq \bigcup_{j=1}^{i-1} L_jR_{i-j} \setminus L_i.
    \end{align*}
    We use these sets to construct a $(1,k)$-overlap-free code $\hat{C}$ with Construction~\ref{(1,k)-overlap}. In the remaining part of the proof, we argue that any $w \in C$ lies in $\hat{C}$. Hence, since $C$ is maximal, $\hat{C} = C$.

    First, observe that the sets $L_i$ and $R_i$ are disjoint for every $i \leq k$ since $C$ is $(1,k)$-overlap-free. The family of pairs of sets  $(L_i,R_i)_{i < k}$, thus, satisfies all the requirements needed to apply the sequence of decompositions $(p_m(w))_{m \geq 0}$ to $w \in L_1 (L_1\cup R_1)^{*} R_1$.
    
    Since $C$ is $(1,k)$-overlap-free, every word in $C$ starts in a letter from $L_1$ and ends in a letter from $R_1$. Hence, $C \subseteq L_1 (L_1\cup R_1)^{*} R_1$, and the sequence of decompositions $(p_m(w))_{m \geq 0}$ is defined for every $w \in C$.
    Every element $p_m(w)$ must start in l and end in r; otherwise, for some $1 \leq t \leq k$, there is a $t$-overlap between two words from $C$.

    Fix $w \in C$ and denote the last element of the sequence of decompositions $(p_m(w))_{m \geq 0}$, where the decompositions $p_m(w)$ of $w$ stabilise under the reduction process, by $p_{\hat{m}}(w)$.
    We claim that there is at most one occurrence of lr in $p_{\hat{m}}(w)$. If there were at least two occurrences of lr, then each would correspond to a subword longer than $k \geq n/2$ and the length of $w$ would exceed $n$. Therefore $p_{\hat{m}} \in l^+r^+$, and $w \in \hat{C}$.
\end{proof}

\begin{proposition}\label{proposition:maximal}
    A code $C$ obtained from Construction~\ref{(1,k)-overlap} is maximal, if every $x \in L_t$ is a prefix in  $C$ and every $x \in R_t$ a suffix in $C$ for all $1 \leq t \leq k$. If $\lvert L_{n/2} \cup R_{n/2}\rvert > 1$, the reverse also holds.
\end{proposition}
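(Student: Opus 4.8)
The plan is to establish the two implications separately. For the forward (sufficiency) direction I will reuse the sequence of decompositions $(p_m(w))_{m\ge0}$ exactly as in the proof that Construction~\ref{(1,k)-overlap} yields all maximal codes; for the reverse direction I will argue by contraposition, exhibiting an addable word whenever some block fails to be a prefix (resp.\ suffix). The symmetry of the construction under reversing words and swapping $L_i\leftrightarrow R_i$ (which interchanges prefixes and suffixes) lets me treat only the $L_t$/prefix case in the reverse direction.

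For sufficiency, fix $w\in\Sigma^n\setminus C$; I will show $C\cup\{w\}$ is not $(1,k)$-overlap-free. If $w_1\in R_1$, then the $1$-letter prefix of $w$ lies in $R_1$ and, by hypothesis, is a suffix of some codeword, giving a $1$-overlap; the case $w_n\in L_1$ is symmetric. Otherwise $w_1\in L_1$ and $w_n\in R_1$, so the decomposition $(p_m(w))_{m\ge0}$ is defined; let $p$ denote its stable (irreducible) value. Because $k\ge n/2$, any two occurrences of the factor $\mathrm{lr}$ in $p$ would correspond to disjoint subwords of $w$, each of length $>k$, forcing $\lvert w\rvert>2k\ge n$; hence $p$ contains at most one $\mathrm{lr}$. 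Consequently, if $p$ begins with $\mathrm{l}$ and ends with $\mathrm{r}$ then $p=\mathrm{l}^a\mathrm{r}^b$, and reading off the block lengths shows $w\in C$: the unique $\mathrm{lr}$ is the central factor $L_jR_{s-j}$, its length exceeds $k$, and since every block has length at most $k$ the Construction's constraints $s\ge k+1$ and $j\in[s-k,k]$ all hold. Since $w\notin C$, the stable decomposition $p$ must therefore begin with $\mathrm{r}$ or end with $\mathrm{l}$. In the first case the leading block of $w$ lies in some $R_d$ with $2\le d\le k$ (the value $d=1$ is excluded by $w_1\in L_1$), and by hypothesis this block is a suffix of a codeword, producing a $d$-overlap; the second case is symmetric, using that the trailing $L_c$-block with $2\le c\le k$ is a prefix of a codeword. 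In every case adding $w$ violates $(1,k)$-overlap-freeness, so $C$ is maximal.

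For the reverse direction I assume $\lvert L_{n/2}\cup R_{n/2}\rvert>1$ and, by contraposition, that some $x\in L_t$ with $t\le k$ is not a prefix in $C$; I will produce a word that can be added to $C$. Since $C$ contains every word of the Construction's form, no codeword begins with $x$, so any length-$n$ word beginning with $x$ is automatically outside $C$. I will build such a word $w=x\,y$ by letting $y$ reproduce the length-$(n-t)$ suffix of a suitably chosen codeword $v\in C$. The short prefixes of $w$, of length at most $t$, are prefixes of the $L_t$-word $x$, hence prefixes ``in $X$'', so by Proposition~\ref{proposition:z} they are never suffixes of codewords; the suffixes of $w$ of length at most $n-t$ coincide with suffixes of $v$ and therefore, by the $(1,k)$-overlap-freeness of $C$ together with Proposition~\ref{proposition:z}, are neither prefixes of codewords nor of $w$. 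This rules out every overlap except those whose length straddles the junction between $x$ and $y$.

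The main obstacle is precisely this junction: overlaps of length $t'$ with $n-t<t'\le k$ see both a suffix of $x$ and the grafted tail $y$, and neither Proposition~\ref{proposition:z} nor the overlap-freeness of $C$ applies directly. This is where the hypothesis $\lvert L_{n/2}\cup R_{n/2}\rvert>1$ enters: it guarantees at least two distinct words of length $n/2$ among the concatenations $L_jR_{n/2-j}$, giving enough freedom to choose $v$ (equivalently, the tail $y$) so that the straddling factors of $w$ differ from every relevant prefix, thereby avoiding a junction overlap. Once such a $v$ is fixed, $w=x\,y$ is $(1,k)$-overlap-free and outside $C$, contradicting maximality; hence every $x\in L_t$ is a prefix in $C$, and symmetrically every $x\in R_t$ is a suffix in $C$. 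I expect the bookkeeping for the junction case, and in particular verifying that the richness hypothesis really supplies a clash-free tail for every problematic $x$, to be the most delicate part of the argument.
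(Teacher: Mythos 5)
Your forward (sufficiency) direction is essentially the paper's own argument: reduce $w$ via $(p_m(w))_{m\ge 0}$, use $k\ge n/2$ to bound the stable word by a single occurrence of $\mathrm{lr}$, identify $\mathrm{l}^+\mathrm{r}^+$ with membership in $C$, and convert a leading $R_d$-block or trailing $L_c$-block into a forbidden overlap via the prefix/suffix hypothesis. That half is fine.

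The reverse direction, however, has a genuine gap, and I think the root cause is that you graft $x$ onto the wrong end. You form $w=xy$ with $x\in L_t$ as a \emph{prefix}; but the hypothesis you are negating is that $x$ is not a \emph{prefix} in $C$, which in your setup is only used to conclude $w\notin C$ --- it does nothing to kill overlaps. (The $t$-overlap danger for your $w$ is ``$x$ is a suffix of a codeword,'' which Proposition~\ref{proposition:z} already excludes unconditionally.) The overlaps that actually threaten $w=xy$ are exactly the junction ones you flag --- prefixes of $w$ of length in $(t,k]$ and suffixes of length in $(n-t,k]$ --- and you defer them entirely, with no proof that a codeword $v$ with a ``clash-free'' tail exists; your appeal to $\lvert L_{n/2}\cup R_{n/2}\rvert>1$ as supplying ``enough freedom'' is not an argument and does not match how that hypothesis can be used. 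The paper instead places $x$ at the \emph{end} of the new word and builds the front out of $L_i$/$R_i$ blocks (with an explicit case split $t=k$, $t\le n-k-1$, and $n-k-1<t<k$, where one must also check the relevant blocks are non-empty, e.g.\ that $R_{n-t}=\emptyset$ forces $L_{n-t}\ne\emptyset$); then Proposition~\ref{proposition:z} disposes of every overlap except the single one where the length-$t$ suffix $x$ of the new word coincides with its length-$t$ prefix. That can only happen when $t=n-t=n/2$ and $L_{n/2}=\{x\}$, which is precisely --- and only --- where $\lvert L_{n/2}\cup R_{n/2}\rvert>1$ is needed. To repair your proof you would either have to carry out the junction bookkeeping you postponed (and I do not see how the richness hypothesis alone closes it for general $t$), or switch to appending $x$ as the paper does.
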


\begin{proof}
    $(\Leftarrow):$ \\
    Suppose that $C$ is not maximal. Then there exists some $w \in \Sigma^n \setminus C$ such that $C \cup \{w\}$ is a $(1,k)$-overlap-free code. Form a finite sequence of decompositions $p_1(w), p_2(w), \dots \in \{\text{l}, \text{r}\}^+$ as before, and consider the last element in this sequence. Since $k \geq n/2$, the last element of the  decomposition is of one of the forms l\textsuperscript{+}, r\textsuperscript{+}, l\textsuperscript{+}r\textsuperscript{+}, r\textsuperscript{+}l\textsuperscript{+}, l\textsuperscript{+}r\textsuperscript{+}l\textsuperscript{+}, r\textsuperscript{+}l\textsuperscript{+}r\textsuperscript{+}. Since $w \not\in C$, it cannot be l\textsuperscript{+}r\textsuperscript{+}. It also cannot be of the forms l\textsuperscript{+}, r\textsuperscript{+}l\textsuperscript{+} nor l\textsuperscript{+}r\textsuperscript{+}l\textsuperscript{+} since then $w$ ends in a word from some $L_t$ that occurs as a prefix in $C$. If the last element of the decomposition belongs to any of the forms r\textsuperscript{+}, r\textsuperscript{+}l\textsuperscript{+} or r\textsuperscript{+}l\textsuperscript{+}r\textsuperscript{+}, $w$ starts in a word from some $R_t$ that is a $t \leq k$ letters long suffix in $C$. Hence, the assumption cannot hold, and $C$ is maximal.
    
    $(\Rightarrow):$ \\
    By symmetry, we only need to settle the case for prefixes. Suppose an index $t$ and $x \in L_t$ exist such that $x$ is not a prefix in $C$. We show that then $C$ is not maximal.
    In particular, if $t = k$, then $L_1^{n-t}x \cup C$ is $(1,k)$-overlap-free. If $t \leq n-k-1$, then $Y = L_1^{n-k-t}R_kx \cup L_1^{n-k-1-t}L_lR_1x$ is non-empty and $C \cup Y$ $(1, k)$-overlap-free.
    If $n-k-1 < t < k$, then $n-t < k +1$. The sets $L_{n-t}$ and $R_{n-t}$ are defined, and their union is non-empty. Since $xR_{n-t}\subseteq C$ and $x$ is not a prefix in $C$, the set $R_{n-t}$ is empty. But then $L_{n-t}x$ is non-empty. Unless $t=n-t = n/2$, the set $C \cup L_{n-t}x$ is $(1,k)$-overlap-free and larger than $C$. For $t=n-t = n/2$, we need the condition $\lvert R_{n/2} \rvert + \lvert L_{n/2} \rvert > 1$. Since we already explained that $R_{n/2}$ is empty, the requirement implies $\lvert L_{n/2} \rvert > 1$. Then the set $L_{n-t}\setminus\{x\}$ is non-empty, so $C \cup \left(L_{n-t}\setminus\{x\}\right)x$ is larger than $C$, but still $(1,k)$-overlap-free.
\end{proof}

The case $\lvert L_{n/2} \cup R_{n/2} \rvert = 1$ only happens over a binary alphabet. Additionally, $\lvert L_i \cup R_i \rvert = 1$ holds for $1 < i < n/2$ in this case. In particular, either $\lvert L_2 \rvert = \cdots = \lvert L_{n/2-2} \rvert = 0$ or $\lvert R_2 \rvert = \cdots = \lvert R_{n/2-2} \rvert = 0$~\cite{stanovnik:2024}. Hence, we can further characterise maximal $(1,k)$-overlap-free codes.

\begin{proposition}
    Suppose $C$ from Construction~\ref{(1,k)-overlap} is maximal and $\lvert L_{n/2} \cup R_{n/2} \rvert = \{u\}$. The following statements hold.\\
    (i) For all $t$ distinct from $n/2$, every $x \in L_t$ is a prefix in $C$ and every $x \in R_t$ a suffix in $C$. \\
    (ii) If $u \in L_{n/2}$ and $u$ is not a prefix in $C$, then for every non-empty $L_j, 2\leq j \leq k - n/2$, the word in $L_jL_{n/2}$ is a prefix in $C$. Moreover, if $k > n/2$, the word in $L_1L_{n/2}$ is a prefix in $C$, or $L_{n/2-1}$ is empty and for every non-empty $L_j$, $1 \leq j \leq k - n/2 -1$, the word in $L_j L_1L_{n/2}$ is a prefix in $C$. \\
    (iii) If $u \in R_{n/2}$ and $u$ is not a suffix in $C$, then for every non-empty $R_j, 2\leq j \leq k - n/2$, the word in $R_{n/2}R_j$ is a suffix in $C$. Moreover, if $k > n/2$, the word in $R_{n/2}R_1$ is a suffix in $C$, or $R_{n/2-1}$ is empty and for every non-empty $R_j$, $1 \leq j \leq k - n/2 -1$, the word in $R_{n/2}R_1R_j$ is a suffix in $C$.
\end{proposition}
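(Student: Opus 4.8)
The plan is to obtain (i) essentially for free and to spend the real effort on (ii). For (i) I would re-examine the $(\Rightarrow)$ direction of Proposition~\ref{proposition:maximal}: the hypothesis $\lvert L_{n/2}\cup R_{n/2}\rvert>1$ is invoked there only in the single sub-case $t=n-t=n/2$, whereas for every $t\neq n/2$ the word enlarging $C$ is produced unconditionally ($L_1^{\,n-t}x\cup C$ when $t=k$, the two words displayed there when $t\le n-k-1$, and $L_{n-t}x\cup C$ when $n-k-1<t<k$). Hence maximality of $C$ still forces every $x\in L_t$ to be a prefix and every $x\in R_t$ a suffix whenever $t\neq n/2$, which is exactly (i). Since reversing all codewords sends a maximal $(1,k)$-overlap-free code to one with the roles of $L$ and $R$ interchanged, (iii) follows from (ii), and I would prove only (ii).

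For (ii) I fix $u\in L_{n/2}$ that is not a prefix of any codeword, so that $L_{n/2}=\{u\}$ and $R_{n/2}=\emptyset$, and I use the structural facts recorded before the statement: over the binary alphabet each $L_i\cup R_i$ with $1<i<n/2$ is a singleton, and all these singletons lie on one side. The engine is maximality read through the reduction $(p_m(w))_{m\ge0}$: every $w\in\Sigma^n\setminus C$ must create a $t$-overlap with $1\le t\le k$. I would concentrate on the words $w$ whose reduced decomposition is $\mathrm{l}^{+}$ with last block $u$, that is $w\in L_{\delta_1}\cdots L_{\delta_{q-1}}L_{n/2}$ with $\delta_1+\cdots+\delta_{q-1}=n/2$. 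For such a $w$ the length-$n/2$ suffix equals $u$, which is \emph{not} a codeword prefix, so maximality can only block $w$ through a longer suffix that is a codeword prefix; the shortest available one is $L_{\delta_{q-1}}L_{n/2}$, of overlap length $\delta_{q-1}+n/2$, which is admissible precisely when $\delta_{q-1}+n/2\le k$, i.e. $\delta_{q-1}\le k-n/2$.

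Concretely, for a non-empty block $L_j$ with $2\le j\le k-n/2$ I would assume, for contradiction, that the word $yu$ in $L_jL_{n/2}$ is not a prefix, and exhibit a length-$n$ word $w$ ending in $yu$ whose leading $n/2-j$ letters form a single L-block (so $w\notin C$, its reduced form being $\mathrm{l}^{+}$). No suffix of $w$ of length at most $k$ is then a codeword prefix: the full suffixes $u$ and $yu$ are excluded by hypothesis, each intermediate suffix is a proper suffix of some block and is excluded by Proposition~\ref{proposition:z}, and the short prefix-side and self-overlaps are ruled out by combining Proposition~\ref{proposition:z} with (i) and the $(1,k)$-overlap-freeness of $C$ (a shared factor would produce a forbidden overlap between two codewords). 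Then $C\cup\{w\}$ is $(1,k)$-overlap-free, contradicting maximality, so $yu$ is a prefix after all. The block $j=1$ is genuinely different, because the overlap $L_1L_{n/2}$ has length $1+n/2$, which is forbidden only when $k>n/2$; this is the source of the ``$k>n/2$'' hypothesis. When $k>n/2$ and the word in $L_1L_{n/2}$ is not a prefix, I would instead block the relevant words through $L_jL_1L_{n/2}$, whose overlap length $j+1+n/2\le k$ forces $1\le j\le k-n/2-1$, and I would show that $L_{n/2-1}$ is then necessarily empty: otherwise the word in $L_{n/2-1}L_1L_{n/2}$ already has length $n$, so no overlap of length $<n$ can block it and it would be addable. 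This is exactly the dichotomy of the ``Moreover'' clause.

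The step I expect to be the genuine obstacle is this $L_1$-block analysis together with the emergence of the ``$L_{n/2-1}$ empty'' alternative, and, hand in hand with it, the verification that each constructed candidate really has \emph{no} overlap with $C$ of any length at most $k$. The suffix-side checks collapse cleanly onto Proposition~\ref{proposition:z}, but the longer prefix-side overlaps and the self-overlaps require care: one must choose the leading block of $w$ so that Proposition~\ref{proposition:z} applies, and confirm, using the one-sided singleton structure, that the only non-empty $L_j$ demanding attention are those named in the statement — in particular that out-of-range blocks with $j>k-n/2$ are either empty or give words already blocked by shorter overlaps. Comparing every overlap length correctly against the threshold $k$ is where the argument is most delicate.
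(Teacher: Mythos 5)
Your proposal is correct and follows essentially the same route as the paper: (i) is read off from the proof of Proposition~\ref{proposition:maximal}, (iii) follows from (ii) by symmetry, and (ii) is proved by contradicting maximality through adjoining length-$n$ words of the form $L_{n/2-j}L_jL_{n/2}$ (and $L_{n/2-1-j}L_jL_1L_{n/2}$ for the ``Moreover'' clause), with the overlap checks delegated to Proposition~\ref{proposition:z}. The only divergence is minor: where you invoke the one-sided singleton structure to guarantee that the leading $L$-block exists, the paper instead case-splits on whether $L_{n/2-j}$ or $R_{n/2-j}$ is non-empty (concluding in the latter case that the word is already a prefix via $L_jL_{n/2}R_{n/2-j}\subseteq C$), and your explicit argument that $L_{n/2-1}$ must be empty supplies a step the paper's proof only takes as a hypothesis.
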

\begin{proof}
    Statement (i) follows from the proof of Proposition~\ref{proposition:maximal}. Due to the symmetry of (ii) and (iii), we only prove (ii).
    Suppose that for some $j \in \left[k-n/2\right]$, the word $v$ in $L_jL_{n/2}$ is not a prefix in $C$. 
    Since $n/2-j < k$, the sets $R_{n/2-j}$ and $L_{n/2-j}$ are defined and their union is non-empty. If $L_{n/2-j}$ is non-empty, $C \cup L_{n/2}L_jL_{n/2}$ is a $(1,k)$-overlap-free code larger than $C$. This contradicts the maximality of $C$. On the other hand, if $R_{n/2-j}$ is non-empty, $L_jL_{n/2}R_{n/2-j}$ is a non-empty subset of $C$ that contains a word that starts in $v$ which cannot happen if $v$ is not a prefix in $C$. Since the sets $L_{n/2-j}$ and $R_{n/2-j}$ cannot be simultaneously empty, the initial assumption cannot hold. 

    Now suppose that the word $v'$ in $L_1L_{n/2}$ is not a prefix in $C$, $L_{n/2-1}$ is empty, and there exists some $j \leq k - n/2 - 1$ such that the word in $L_jL_1L_{n/2}$ is not a prefix in $C$. Since $n/2-1-j < k$, the sets $L_{n/2-1-j}$ and $R_{n/2-1-j}$ are defined and their union is non-empty. If $L_{n/2-1-j}$ is non-empty, $C \cup L_{n/2-1-j}L_jL_1L_{n/2}$ is a $(1,k)$-overlap-free code larger than $C$. This cannot hold as $C$ was chosen to be maximal. Hence, $L_{n/2-1-j}$ must be empty. If $R_{n/2-1-j}$ is non-empty, $L_jL_1L_{n/2}R_{n/2-1-j}$ is a non-empty subset of $C$ and there is a word in $C$ that starts in $v'$. This implies that the union of $R_{n/2-1-j}$ and $L_{n/2-1-j}$ is empty, which cannot hold due to Construction~\ref{(1,k)-overlap}. Therefore, our assumption was wrong, and statement (ii) holds.
\end{proof}

Before we continue constructing $(t_1,t_2)$-overlap-free codes, we want to recall a known construction of weakly mutually uncorrelated codes from non-overlapping codes. 
Yazdi et al.~\cite{Yazdi:2018} explained that concatenating a non-overlapping code $X \subseteq \Sigma^n$ with $\Sigma^k$  yields a $(k+1)$-weakly mutually uncorrelated code of length $n+k$.
Construction~\ref{construction:kWMU} shows that their construction is often expandable, even if $C$ is a maximal non-overlapping code.

\begin{construction}\label{construction:kWMU}
    Let $X = \bigcup_{i=1}^{n-1} L_i R_{n-i}$ be a code obtained from Construction~\ref{construction:fimmel}.
    Then
    \begin{align*}
        C = \bigcup_{n \leq i+j \leq n+k} L_i R_j\Sigma^{n+k-i-j}
    \end{align*}
    is a $(k+1)$-weakly mutually uncorrelated code of length $n+k$.
\end{construction}

\begin{proof}
    Every $l$-letter suffix of $C$ with $l > k$ has a prefix $w$ that is a suffix in some $L_iR_j$. Its length equals $l-(n+k-i-j) \geq i+j+1-n \geq 1$ . Suppose $w$ is a prefix in some $L_{p}R_{r}\Sigma^{n+k-p-r}$. 
    Construct (any) partitions $(L_s, R_s)$  for $s \in \left[n,n+k-1\right]$ with the method from Construction~\ref{construction:fimmel}, and denote $Y = \bigcup_{s \in \left[n+k-1\right]} L_sR_{n+k-s}$.
     If $\lvert w \rvert < p + r$, word $w$ is a suffix in $L_{i+j} \cup R_{i+j}$ and a prefix in $L_{p+r} \cup R_{p+r}$. Since $Y$ is non-overlapping, this contradicts Proposition~\ref{proposition:z}.
     If $\lvert w \rvert \geq p + r$,  then $\lvert w \rvert \geq n$ and $w$ has a prefix $\hat{w}$ that is a suffix in $L_i$ because $j < n$. Depending on its length, either $\hat{w}$ is a prefix in $L_{p}$ or has a suffix that occurs as a prefix in $R_{r}$. Both cases contradict Proposition~\ref{proposition:z}. Hence, $C$ is $(k+1)$-weakly mutually uncorrelated.
\end{proof}

We claim that for $(i,j) \neq (p, r)$, the sets of words $L_iR_j\Sigma^{n+k-i-j}$ and $L_{p}R_{r}\Sigma^{n+k-p-r}$ are disjoint. To see this, note that if $i+j = p + r$, the reverse would contradict Proposition~\ref{proposition:unique_index}. For $i+j \neq p + r$, we can assume $i \leq p$ without loss of generality. If $i = p$, a non-empty intersection of $L_iR_j\Sigma^{n+k-i-j}$ and $L_{p}R_{r}\Sigma^{n+k-p-r}$ would imply that a word from the set $R_{\min\{j,r\}}$ is a prefix in a word from $R_{\max \{j, r\}}$ contradicting Proposition~\ref{proposition:z} and establishing our claim. If $i < p$, the non-empty intersection would mean that a suffix from $L_{p}$ occurs as a prefix in $R_j$ (since $i+j \geq n$ and $p < n$), again contradicting Proposition~\ref{proposition:z} and establishing our claim. Hence, the size of the code $C$ generated by Construction~\ref{construction:kWMU} equals
\begin{align*}
    \lvert C \rvert = \sum_{n \leq i+j \leq n+k}\lvert L_i \rvert \lvert R_{j} \rvert q^{n+k-i-j}.
\end{align*}

This idea can be generalised to construct a $(t_1,t_2)$-overlap-free code. We first provide a simple construction based on the method from Yazdi et al.~\cite{Yazdi:2018} that will be used in the next section to determine a lower bound on $S_{t_1}^{t_2}(q,n)$, and then expand it using the analogy with weakly mutually uncorrelated codes.
\begin{construction}\label{construction:t1t2}
Let $X$ be a $(1,t_2)$-overlap-free code of length $n-t_1+1$ if $t_2 < n - t_1 + 1$ and a non-overlapping code of length $n-t_1+1$, otherwise. Then $C \coloneqq X\Sigma^{t_1 - 1}$ is a $(t_1,t_2)$-overlap-free code of length $n$. The number of words in it equals $\lvert C \rvert = \lvert X \rvert q^{t_1 - 1}$.
\end{construction}

\begin{remark}
    If $t_2 > n-t_1$, Construction~\ref{construction:t1t2} provides a $t_1$-weakly mutually uncorrelated code that is obtainable from the Construction from Yazdi et al.~\cite{Yazdi:2018}. Thus, its expansion is the same as Construction~\ref{construction:kWMU}, and thus we do not explicitly write it down in Construction~\ref{construction:expandedt1t2}.
\end{remark}

\begin{construction}\label{construction:expandedt1t2}
Let $t_1 + t_2 \leq n$. Let $X$ be a $(1,t_2)$-overlap-free code of length $n-t_1+1$ from Construction~\ref{(1,k)-overlap}.
\begin{align*}
    C = \bigcup_{\substack{k\in\left[0,t_1-1\right] \\ s \in \left[t_1+t_2-k,n-k\right] \\ j \in \left[s-t_2,t_2\right] \\\alpha \text{ composition of } n-k-s\\ i \in \left[0, par(\alpha)\right]}}
    L_{\alpha_1} \cdots L_{\alpha_i} L_jR_{s-j} R_{\alpha_{i+1}}\cdots R_{\alpha_{par(\alpha)}}\Sigma^{k},
\end{align*}
where for $i \in \left[t_2\right]$ the sets $L_i$ and $R_i$ are defined in Construction~\ref{(1,k)-overlap}, is a $(t_1,t_2)$-overlap-free code.
\end{construction}
\begin{proof}
    Let $w$ be a suffix of a word in $L_{\alpha_1} \cdots L_{\alpha_i} L_jR_{s-j} R_{\alpha_{i+1}}\cdots R_{\alpha_{par(\alpha)}}\Sigma^{k}$ with length $l$, $t_1 \leq l \leq t_2$. Then, $w$ has a prefix $w'$ that is a suffix in $L_{\alpha_1} \cdots L_{\alpha_i} L_jR_{s-j} R_{\alpha_{i+1}}\cdots R_{\alpha_{par(\alpha)}}$. Because $l-k \leq t_2 -k$, $w'$ is a suffix in $L_jR_{s-j} R_{\alpha_{i+1}}\cdots R_{\alpha_{par(\alpha)}}$. Suppose, for sake of contradiction, that $w'$ is a prefix in $L_{\beta_1}\cdots L_{\beta_{i'}}L_{j'}R_{s'-j'}R_{\beta_{i'+1}}\cdots R_{\beta_{par{\beta}}} \Sigma^{k'}$. Since $l-k \leq t_2 - k$, $w'$ is a prefix in $L_{\beta_1}\cdots L_{\beta_{i'}}L_{j'}R_{s'-j'}$. Now, there exist integers $m$ and $m'$ and, depending on the length of $w'$, a prefix or a suffix of $w'$ that is a prefix in $L_m \cup R_m$ and a suffix in $L_{m'}\cup R_{m'}$, contradicting Proposition~\ref{proposition:z}.
\end{proof}

\noindent Finally, we provide a construction of simultaneously $(1,k)$- and $(n-k,n-1)$-overlap-free codes for $k < n/2$.
\begin{construction}\label{construction:simmultaneous}
Let $X = \bigcup_{i\leq k} L_iR_{k+1-i}$ be a non-overlapping code of length $k+1$.
\begin{align*}
    C = \bigcup_{\alpha} X \Sigma^{n-2k-1} R_{\alpha_1}\cdots R_{\alpha_{par(\alpha)}} ,
\end{align*}
where $\alpha$ iterates over all compositions of $k$, is a simultaneously $(1,k)$-overlap-free and $(n-k,n-1)$-overlap-free code.
\end{construction}

\begin{proof}
    Any suffix $u$ of $C$ of length at least $n-k$ has a prefix $u'$ that occurs as a suffix in $X$. Since $X$ is non-overlapping $u'$ cannot occur as a prefix in $X$, so $C$ is $(n-k,n-1)$-overlap-free. Moreover, any suffix $u$ of $C$ shorter than $k+1$ starts in a prefix $u'$, a word in some $R_i$, or a suffix of some $R_i$. None of these can occur as a prefix in $X$ due to Proposition~\ref{proposition:z}.
\end{proof}

\section{Lower bounds}\label{section:lower}
In this section, we determine lower bounds on $S_{t_1}^{t_2}$ and compare them with the upper bounds given in Section~\ref{section:upper}.
\begin{proposition}\label{proposition:lower_bound}
    \begin{align*}
        S_1^k(q,n) \geq \left(q-\biggl\lfloor\frac{k}{k+1}\cdot q\biggr\rfloor\right) \biggl\lfloor\frac{k}{k+1}\cdot q\biggr\rfloor^k q^{n-k-1}.
    \end{align*}
\end{proposition}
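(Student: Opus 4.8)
The plan is to exhibit an explicit $(1,k)$-overlap-free code of length $n$ whose size matches the claimed bound, and the natural candidate is a code produced by Construction~\ref{(1,k)-overlap} with the simplest possible choice of the partitioning family $(L_i,R_i)$. Since Construction~\ref{(1,k)-overlap} always yields a $(1,k)$-overlap-free code, I only need to choose the sets $L_i,R_i$, verify they satisfy the partition requirements of the construction, and then compute the size via Corollary~\ref{size:1k}, bounding it below by the stated quantity. Guided by the exact value in Eq.~\eqref{eq:5} and by the optimal split for non-overlapping codes, I would set $\lvert L_1\rvert = q - \lfloor \frac{k}{k+1}q\rfloor$ and $\lvert R_1\rvert = \lfloor \frac{k}{k+1}q\rfloor$, so that $L_1$ is the ``small'' part and $R_1$ the ``large'' part of $\Sigma$.

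First I would fix this partition of $\Sigma$ and then, for $2 \le i \le k$, I would choose the partition of $\bigcup_{j=1}^{i-1}L_jR_{i-j}$ in whatever way is convenient (for instance, putting everything into $R_i$ and leaving $L_i$ empty, or vice versa); the key point is that the validity of Construction~\ref{(1,k)-overlap} does not depend on this finer choice, only on the top-level sizes $\lvert L_1\rvert$ and $\lvert R_1\rvert$. Next I would restrict attention to the simplest codewords in the union defining $C$, namely those indexed by the empty composition $\alpha$ (so $n-s=0$, forcing $s=n$) together with a single ``long'' block $L_jR_{n-j}$. With $s=n$ the admissible range is $j\in[n-k,k]$, and taking just $j=k$ gives the subfamily $L_kR_{n-k}$; more simply, I would extract from the full code the words lying in $L_1^{k}R_1^{\,n-k}$, i.e.\ codewords whose first $k$ letters come from $L_1$ and whose remaining $n-k$ letters come from $R_1$. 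Counting these directly gives $\lvert L_1\rvert^{\,k}\lvert R_1\rvert^{\,n-k} = \bigl(q-\lfloor\frac{k}{k+1}q\rfloor\bigr)^{k}\lfloor\frac{k}{k+1}q\rfloor^{\,n-k}$, which is already a lower bound but has the wrong exponents.

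To land exactly on the claimed expression $\bigl(q-\lfloor\frac{k}{k+1}q\rfloor\bigr)\lfloor\frac{k}{k+1}q\rfloor^{k}q^{\,n-k-1}$, I would instead appeal to Proposition~\ref{proposition:generalise_cai} when $n>2k$, reducing to length $2k$ or, more transparently, to the structure of the construction at the ``boundary'' block. The cleanest route is to count the codewords of the form $L_1 R_1^{k} \Sigma^{\,n-k-1}$: a single letter from the small part $L_1$, followed by $k$ letters from the large part $R_1$, followed by $n-k-1$ arbitrary letters, which one checks is a legitimate subfamily of $C$ (it corresponds to taking the $s=k+1$, $j=1$ term $L_1R_k$ whose tail expands freely). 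This yields exactly $\lvert L_1\rvert\,\lvert R_1\rvert^{k}\,q^{\,n-k-1}$, and since $\lvert C\rvert$ is at least the size of any such subfamily by Corollary~\ref{size:1k}, the proposition follows.

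The main obstacle, and the step I would spend the most care on, is justifying that the chosen subfamily is genuinely realizable inside Construction~\ref{(1,k)-overlap} with the prescribed cardinalities---that is, confirming that one can split $\Sigma$ so that $\lvert L_1\rvert$ and $\lvert R_1\rvert$ take the desired integer values and that the free $\Sigma^{\,n-k-1}$ tail is accounted for by a valid choice of indices $(s,j,\alpha,i)$ in the union. The optimization of $\lvert R_1\rvert = \lfloor\frac{k}{k+1}q\rfloor$ is a routine maximization of $x(q-x)^{?}$-type products over integers, so I would only remark that this integer choice maximizes the relevant product and hence gives the strongest bound obtainable from this family, rather than grinding through the rounding analysis.
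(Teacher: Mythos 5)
Your plan breaks down at the central step. The set $L_1R_1^k\Sigma^{n-k-1}$ is not a subfamily of Construction~\ref{(1,k)-overlap}, and it is not even a $(1,k)$-overlap-free code on its own: since the tail $\Sigma^{n-k-1}$ is unconstrained, you can pick a codeword $v$ whose last letter lies in $L_1$ and a codeword $u$ whose first letter is that same letter, giving a forbidden $1$-overlap (and similarly $t$-overlaps for larger $t$). In the construction, the terms with $s=k+1$, $j=1$ have the form $L_{\alpha_1}\cdots L_{\alpha_i}L_1R_kR_{\alpha_{i+1}}\cdots R_{\alpha_{par(\alpha)}}$; the tail is a product of $R$-sets determined by a composition $\alpha$ of $n-k-1$, never a free $\Sigma^{n-k-1}$. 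So the claim that ``the tail expands freely'' has no justification, and the count $\lvert L_1\rvert\,\lvert R_1\rvert^k q^{n-k-1}$ does not come from any single term of the union. (Your fallback subfamily $L_1^kR_1^{n-k}$ is likewise not of the form $L_kR_{n-k}$, since every word in $L_k\cup R_k$ for $k\geq 2$ must end in a letter of $R_1$; it does happen to be $(1,k)$-overlap-free by a direct check, but as you note it only gives the weaker bound with the wrong exponents.)

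The paper gets the factor $q^{n-k-1}$ only after summing over \emph{all} compositions, and this requires committing to a specific degenerate partition rather than ``whatever is convenient'': it sets $R_i=\emptyset$ for all $i>1$, which forces $L_i=L_1R_1^{i-1}$, writes the whole code as
\begin{align*}
C=\bigcup_{j}\bigcup_{l}\bigcup_{\alpha}\bigl(L_1R_1^{\alpha_1-1}\bigr)\cdots\bigl(L_1R_1^{\alpha_l-1}\bigr)\bigl(L_1R_1^{k}\bigr)R_1^{j},
\end{align*}
and then shows via Corollary~\ref{size:1k}, the binomial theorem, and a geometric series that the total size telescopes to exactly $(q-x)x^kq^{n-k-1}$ with $x=\lvert R_1\rvert$. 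Note also that the size of the full code genuinely depends on the finer partition choices for $i\geq 2$, contrary to your assertion, so this choice is not cosmetic. Finally, the case $k<n/2$ is handled separately through Eq.~\eqref{eq:4}, and the rounding direction ($\lfloor\cdot\rfloor$ rather than $\lceil\cdot\rceil$) matters because $\lceil\frac{k}{k+1}q\rceil$ can equal $q$ when $q<k+1$, which would make $L_1$ empty and invalidate the partition; this is not merely ``routine'' and should be stated.
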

\begin{proof}
    We first consider the case when $k \geq n/2$.
    Let $k < n \leq 2k$, and $x$ be some integer between 0 and $q$. Partition $\Sigma$ into two sets of sizes $x$ and $q-x$ respectively. Denote the set with $x$ elements $R_1$ and the set with $q- x$ elements $L_1$. Now proceed with Construction~\ref{(1,k)-overlap} by setting $R_i$ empty for every $i > 1$. Observe that this implies $L_i = L_1R_1^{i-1}$ for $2 \leq i \leq k$. The resulting code $(1,k)$-overlap-free code $C$ equals
    \begin{align*}
        C = \bigcup_{l=0}^{n-k-1}\bigcup_{i=0}^l \bigcup_{\alpha} L_{\alpha_1} \cdots L_{\alpha_i}L_1 R_1^k R_1^{\alpha_{i+1} + \cdots +\alpha_l},
    \end{align*}
    where $\alpha$ iterates over all compositions of $n-k-1$ into $l$ parts. 
    Introduce $j = \alpha_{i+1} + \cdots +\alpha_l$ and rewrite the unions so that $j$ iterates over all integers from 0 to $n-k-1$ and $\alpha$ iterates over all compositions of $n-k-1-j$. In particular,
    \begin{align*}
        C = \bigcup_{j=0}^{n-k-1}\bigcup_{l=0}^{n-k-1-j} \bigcup_{\alpha} \left(L_1R_1^{\alpha_1-1}\right) \cdots \left(L_1R_1^{\alpha_l-1}\right)\left(L_1 R_1^k \right) R_1^{j}.
    \end{align*}
    The size of this code can be determined from Corollary~\ref{size:1k}.
    \begin{align*}
        \lvert C \rvert &= \sum_{j=0}^{n-k-1} \sum_{l=0}^{n-k-1-j}\sum_{\alpha} \lvert L_1\rvert^{l+1} \lvert R_1\rvert^{n-k-1-j-l+k+j} \\
        &= \sum_{j=0}^{n-k-1} \sum_{l=0}^{n-k-1-j}\sum_{\alpha} (q-x)^{l+1}x^{n-1-l}.
    \end{align*}
    For $j = n - k - 1$ there is one composition of $n-k-1-j$ into zero parts and no compositions into $l > 0$ parts.
    For $j < n - k - 1$ there are no compositions of $n-k-1-j$ into zero parts, and the number of compositions into $l > 0$ parts equals $\binom{n-k-j-2}{l-1}$. Hence, 
    \begin{align*}
        \lvert C \rvert &= (q-x) x^{n-1} + \sum_{j=0}^{n-k-2}\sum_{l=1}^{n-k-1-j}\binom{n-k-j-2}{l-1} (q-x)^{l+1} x^{n-1-l},
    \end{align*}
   and after performing additional calculations one obtains the simplified formula for $\lvert C \rvert$.
    \begin{align*}
        \lvert C \rvert 
        &= (q-x)x^{n-1} + (q-x)^2 x^k \sum_{j=0}^{n-k-2} x^j (q-x+x)^{n-k-2-j} \\
        &= (q-x)x^{n-1} + (q-x)^2 x^k q^{n-k-2} \sum_{j=0}^{n-k-2}  \left(\frac{x}{q}\right)^j \\
        &= (q-x)x^{n-1} + (q-x)^2 x^k q^{n-k-2} \cdot \frac{1 - \left(\frac{x}{q}\right)^{n-k-1}}{1-\frac{x}{q}} \\ 
        &= (q-x)x^{n-1} + (q-x)x^k(q^{n-k-1} - x^{n-k-1}) \\
        &= (q-x)x^k q^{n-k-1}.
    \end{align*}
    We are now left to find the integer value $x \in (0, q)$ that produces the largest code $C$. We first analyse $f(x) = (q-x)x^k q^{n-k-1}$ on the closed interval $\left[0,q\right]$. Since $f$ is differentiable, $f(0) = f(q) = 0$ and $f(x) > 0$ for $x \in (0,q)$, it follows from Fermat's theorem that the global maxima of $f$ lies at a stationary point in the interior of the interval. Its derivative
    \begin{align*}
        f'(x) = q^{n-k-1}x^{k-1}(qk - (k+1)x),
    \end{align*}
    has exactly one stationary point on the open interval $(0,q)$, the point $x = \frac{k}{k+1} \cdot q$ in which $f$ achieves its maximum.
    Therefore, $f$ restricted to integers attain its maximum either at $\bigl\lfloor \frac{k}{k+1} \cdot q \bigr\rfloor$ or $\bigl\lceil \frac{k}{k+1} \cdot q \bigr\rceil$. In the bound, we round the value down since $1 \leq \bigl\lfloor \frac{k}{k+1} \cdot q \bigr\rfloor \leq q-1$, while for $q < k+1$, $\bigl\lceil \frac{k}{k+1} \cdot q \bigr\rceil > q -1$ giving an invalid partition for our construction.
    
    \noindent We are left to cover the case $k < n/2$. We know from Eq.~\eqref{eq:4} that
    \begin{align*}
        S_1^k(q,n) &= q^{n-2k} S_1^k(q,2k) \\
        &\geq  q^{n-2k+2k-k-1} \left(q - \biggl\lfloor \frac{k}{k+1} \cdot q\biggr\rfloor \right) \biggl\lfloor\frac{k}{k+1}\cdot q\biggr\rfloor^k \\
        &= q^{n-k-1} \left(q - \biggl\lfloor\frac{k}{k+1}\cdot q\biggr\rfloor\right) \biggl\lfloor\frac{k}{k+1}\cdot q\biggr\rfloor^k.
    \end{align*}
    This shows that the proposition holds for all values of $1 \leq k < n$.
\end{proof}

We want to remark that for $k=2$ the bound in Proposition~\ref{proposition:lower_bound} is the same as the exact value given in Eq.~\eqref{eq:5}. Moreover, when $k+1$ divides $q$, the lower bound equals $\left(\frac{k}{k+1}\right)^k \frac{q^n}{k+1}$ and we will show that the upper bound on $S_1^k(q,n)$ from Section~\ref{section:upper} is tight to within a constant factor.

\begin{proposition}
    If $k+1$ divides $q$
    \begin{align*}
        S_1^k(q,n) \geq \frac{q^n}{e n},
    \end{align*}
    where $e$ is the base of the natural logarithm.
\end{proposition}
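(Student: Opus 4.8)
The plan is to start from the exact size formula established just above, namely that the construction in Proposition~\ref{proposition:lower_bound} yields a $(1,k)$-overlap-free code of size $(q-x)x^k q^{n-k-1}$ for any integer $x \in (0,q)$, and then exploit the divisibility hypothesis. Since $k+1$ divides $q$, the value $x = \frac{k}{k+1}\cdot q$ is itself an integer, so we may plug it in directly rather than rounding. This is the key simplification: the remark immediately preceding the statement already records that in this case the lower bound becomes $\left(\frac{k}{k+1}\right)^k \frac{q^n}{k+1}$, so the real content is purely the numerical inequality $\left(\frac{k}{k+1}\right)^k \frac{q^n}{k+1} \geq \frac{q^n}{en}$.

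First I would cancel the common factor $q^n$ and reduce the claim to showing $\left(\frac{k}{k+1}\right)^k \frac{1}{k+1} \geq \frac{1}{en}$, equivalently $en \geq (k+1)\left(\frac{k+1}{k}\right)^k$. The plan is to bound the $k$-dependent quantity $\left(1+\frac{1}{k}\right)^k$ using the standard fact that this sequence is increasing and converges to $e$ from below, so $\left(1+\frac{1}{k}\right)^k < e$ for every positive integer $k$. This gives $(k+1)\left(\frac{k+1}{k}\right)^k < e(k+1)$.

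It then suffices to observe that $k+1 \leq n$, which holds because the construction requires $k < n$ (indeed Construction~\ref{(1,k)-overlap} is stated for block length $n$ with $k \leq n$, and the overlap parameter satisfies $k \leq n-1$). Combining these two bounds yields $(k+1)\left(\frac{k+1}{k}\right)^k < e(k+1) \leq en$, which is exactly what is needed. Re-inserting the factor $q^n$ completes the argument.

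The only genuine obstacle, and it is a mild one, is making sure the chain of inequalities points the right way and that the strict inequality $\left(1+\frac{1}{k}\right)^k < e$ is invoked correctly rather than its reverse; everything else is bookkeeping. I would also double-check the edge case $k+1=n$ (where $k+1 \leq n$ is tight) to confirm the bound still holds, which it does since the $e$-gap in $\left(1+\frac1k\right)^k<e$ provides the needed slack.
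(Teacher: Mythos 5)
Your proposal is correct and follows essentially the same route as the paper: invoke the $\left(\frac{k}{k+1}\right)^k\frac{q^n}{k+1}$ lower bound from Proposition~\ref{proposition:lower_bound} (exact under the divisibility hypothesis), use $k+1\leq n$, and apply $\left(1+\frac{1}{k}\right)^k<e$. The only difference is cosmetic --- you rearrange the target inequality rather than chaining the bounds directly --- so there is nothing further to add.
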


\begin{proof}
By Proposition~\ref{proposition:lower_bound} and since $k+1 \leq n$, we obtain
    \begin{align*}
        S_1^k(q,n) &\geq \left(\frac{k}{k+1}\right)^k \frac{q^n}{k+1} \\
        &\geq \left(\frac{k+1}{k}\right)^{-k} \frac{q^n}{n}.
    \end{align*}
    From the fact that $\left(\frac{k+1}{k}\right)^{k} = \left(1 +\frac{1}{k}\right)^{k} < e$ for every positive $k$, we know that $\left(\frac{k+1}{k}\right)^{-k} > e^{-1}$. The bound follows.
\end{proof}

Even when $k+1$ does not divide $q$, but $q$ is large, the lower bound we determined is close to a constant factor of the upper bound, in the sense that
\begin{align*}
    S_1^k(q,n) \geq \frac{q^n}{en} - O(q^{n-1}).
\end{align*}
However, for a small value of $q$, the bound we determine can be much lower than the known upper bound. If $q < k + 1 \leq n$, $\frac{k}{k+1} q > q - 1$, and hence the formula on the right-hand side in Proposition~\ref{proposition:lower_bound} achieves a value zero if $\frac{k}{k+1} q$ is rounded to $q$, and a non-zero value if we round $\frac{k}{k+1} q$ to $q-1$. Thus,
\begin{align*}
    S_1^k(q,n) &\geq (q-1)^k q^{n-k-1} \\
    & \geq \left(\frac{q-1}{q}\right)^k \frac{q^n}{q} \\
    & \geq \left(\frac{q-1}{q}\right)^k \frac{q^n}{n}.
\end{align*}

 In the case $k=n-1$, we know that the lower bound we determined in Proposition~\ref{proposition:lower_bound} is also the upper bound on the size of a $(1,n-1)$-overlap-free code given in Eq.~\eqref{eq:1}. It would be interesting to know whether this result can be generalised, i.e., whether, for all values of $k$, it holds that $S_1^k(q,n) \leq \left(\frac{k}{k+1}\right)^k \frac{q^n}{k+1}$. We remark that this conjecture cannot be proven by upper-bounding $S_1^k(q,n)$ by $S_1^k(q,k+1) q^{n-k-1}$ and applying Eq.~\eqref{eq:1}. More specifically, for $k \in \{n-2,n-3\}$, we verify in Proposition~\ref{proposition:lower_combinatorial} that $S_1^k(q,n) \geq S_1^k(q,k+1) q^{n-k-1}$, and for some values given in Tables~\ref{tab:(1,n-2)-overlap-free} and \ref{tab:(1,n-3)-overlap-free} a strict inequality holds. They are written in bold. Still, the computed values are all smaller than $\left(\frac{k}{k+1}\right)^k \frac{q^n}{k+1}$.

\begin{table}[ht]
    { %
    \centering %
    \begin{tabular}{rrrr} \hline
         $n$ & $q=2$ & $q=3$ & $q=4$ \\\hline
         5 & \textbf{3} & 24 & 108 \\ 
         6 & \textbf{5} & \textbf{52} & 324 \\
         7 & \textbf{8} & \textbf{140} & \textbf{1,080} \\
         8 & \textbf{13} & \textbf{338} & \textbf{3,567} \\
         9 & \textbf{21} & \textbf{876} & \textbf{12,558} \\
         10 & \textbf{37} & \textbf{2,248} & \textbf{44,726} \\
         11 & \textbf{68} & \textbf{6,688} & \textbf{159,294} \\
         12 & \textbf{125} & \textbf{18,272} & \textbf{567,334} \\
         13 & \textbf{230} & \textbf{49,920} & \textbf{2,066,100} \\
         14 & \textbf{423} & \textbf{136,384} & \textbf{7,494,499} \\
         15 & \textbf{778} & \textbf{372,608} & \textbf{27,858,764} \\
         16 & \textbf{1,431} & \textbf{1,021,312} & \textbf{102,773,636} \\
         17 & \textbf{2,632} & \textbf{2,864,088} & \textbf{397,629,405} \\\hline
    \end{tabular}
    \caption{The sizes of $(1,n-2)$-overlap-free codes constructed from a maximum non-overlapping code of length $n-1$ determined in \cite{stanovnik:2024} using Construction~\ref{(1,k)-overlap}.}
    \label{tab:(1,n-2)-overlap-free}}
    \end{table}
    \begin{table}[ht]
    {
    \centering
    \begin{tabular}{rrrr} \hline
         $n$ & $q=2$ & $q=3$ & $q=4$ \\\hline
         6 & \textbf{6} & 72 & 432 \\
         7 & \textbf{10} & \textbf{156} & 1,296 \\
         8 & \textbf{16} & \textbf{420} & \textbf{4,320} \\
         9 & \textbf{26} & \textbf{1,014} & \textbf{14,268} \\
         10 & \textbf{45} & \textbf{2,725} & \textbf{50,232} \\
         11 & \textbf{79} & \textbf{6,744} & \textbf{178,904} \\
         12 & \textbf{149} & \textbf{20,064} & \textbf{637,176} \\
         13 & \textbf{274} & \textbf{54,816} & \textbf{2,269,336} \\
         14 & \textbf{504} & \textbf{149,760 }& \textbf{8,387,491} \\
         15 & \textbf{927} & \textbf{409,152} & \textbf{30,424,492} \\
         16 & \textbf{1,705} & \textbf{1,117,824} & \textbf{114,655,998} \\
         17 & \textbf{3,136} & \textbf{3,099,715} & \textbf{422,976,906} \\
         18 & \textbf{5,768} & \textbf{8,691,344} & \textbf{1,590,517,620} \\\hline
    \end{tabular}
    \caption{The sizes of $(1,n-3)$-overlap-free codes constructed from a maximum non-overlapping code of length $n-2$ determined in \cite{stanovnik:2024} using Construction~\ref{(1,k)-overlap}.}
    \label{tab:(1,n-3)-overlap-free}
    }
\end{table}

\begin{proposition}\label{proposition:lower_combinatorial}
    Let $q \geq 2$. For all integers $n \geq 4$,
    \begin{align*}
        S_1^{n-2}(q,n) &\geq q\cdot S_1^{n-2}(q,n-1).
    \end{align*}
    Moreover, for all integers $n \geq 6$
    \begin{align*}
        S_1^{n-3}(q,n) &\geq q^2\cdot S_1^{n-3}(q,n-2).
    \end{align*}
\end{proposition}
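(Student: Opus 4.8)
The plan is to establish both inequalities by constructing, for each maximum code of the shorter length, a collection of codes of the longer length whose sizes sum to at least $q$ (respectively $q^2$) times the maximum. The governing principle comes from Proposition~\ref{proposition:generalise_cai} and the reduction underlying Eq.~\eqref{eq:4}: for these regimes $k=n-2$ and $k=n-3$ sit just above $n/2$, so the codes are genuinely length-$n$ objects and cannot be reduced further, which is exactly why a nontrivial combinatorial argument is needed rather than a clean equality.

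First I would treat the case $k=n-2$. The idea is to take a maximum $(1,n-2)$-overlap-free code $C$ of length $n-1$ and produce $q$ many length-$n$ codes, one for each choice of an inserted symbol (or insertion position), such that each is still $(1,n-2)$-overlap-free and they are pairwise essentially disjoint in their contribution to the count. Concretely, I would insert a fresh coordinate into every codeword of $C$ at a fixed position, letting that coordinate range over all of $\Sigma$; the overlap constraints for $1\le t\le n-2$ are controlled because a $t$-overlap of length at most $n-2$ only inspects the outer $t$ symbols at each end, and for $t\le n-2$ on words of length $n$ the relevant prefixes and suffixes are determined by the original length-$(n-1)$ structure away from the inserted position. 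I would verify using Proposition~\ref{proposition:z}, together with the partition data $(L_i,R_i)$ attached to the maximum code in \cite{stanovnik:2024}, that no new overlap of any length $t\in[1,n-2]$ is created. Summing over the $q$ choices gives a code (or family of codes) of total size $q\cdot S_1^{n-2}(q,n-1)$, all of whose members are $(1,n-2)$-overlap-free, yielding the first inequality.

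For the case $k=n-3$ I would iterate the same insertion argument twice, inserting two new coordinates and letting each range over $\Sigma$, so that the multiplicative factor becomes $q^2$. The extra care here is that with $k=n-3$ the overlap constraints now reach one symbol deeper, so I must place the two inserted coordinates so that they never simultaneously land inside both the prefix window and the suffix window of a potential overlap of length $t\le n-3$; choosing the insertions near the middle (which is legitimate since $n-3\ge n/2$ for $n\ge 6$) ensures the inserted symbols are invisible to every forbidden overlap length. Again I would appeal to Proposition~\ref{proposition:z} on the underlying family of partitions to rule out newly-created overlaps.

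The main obstacle I expect is not the counting but the verification that the insertions create no overlap of any intermediate length $t$ strictly between $1$ and $k$: the middle range is where a cleverly chosen overlap could straddle an inserted coordinate and thereby turn a legitimate internal symbol into a spurious boundary match. Handling this cleanly requires pinning down exactly which positions the length-$t$ prefix and length-$t$ suffix occupy for each $t\in[1,k]$, and confirming that for every such $t$ at least one inserted coordinate falls strictly inside the word (hence outside both windows) so that the overlap condition reduces to one already excluded in the length-$(n-1)$ (respectively length-$(n-2)$) code. Once this positional bookkeeping is in place, both inequalities follow by summing the sizes of the resulting $q$ (respectively $q^2$) overlap-free codes.
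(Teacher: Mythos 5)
There is a genuine gap in your construction. You propose to insert a fresh coordinate at a \emph{fixed} position into every codeword and let it range over $\Sigma$, arguing that the inserted symbol is ``invisible'' to every forbidden overlap length. This cannot work for the overlap lengths that matter here: with $k=n-2$ and $t=n-2$, the prefix window occupies positions $1,\dots,n-2$ and the suffix window occupies positions $3,\dots,n$, so together they cover \emph{every} position of the word (and the middle positions lie in both windows). There is no position at which an inserted symbol avoids both windows, so the overlap condition does not reduce to one in the original code. Concretely, $C=\{aab\}\subseteq\{a,b\}^3$ is $(1,2)$-overlap-free, but inserting a coordinate ranging over $\{a,b\}$ at position $1$ produces $baab$ (a $1$-overlap), at position $2$ produces $abab$ (a $2$-overlap with itself), and at position $4$ produces $aaba$ (a $1$-overlap); only position $3$ works, whereas for $C=\{abb\}$ only position $2$ works. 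Hence no single insertion position is sound, and the positional bookkeeping you defer to the end cannot be completed.

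The paper avoids this by not fixing the insertion position at all. It takes a \emph{maximum non-overlapping} code $C=\bigcup_i L_iR_{k+1-i}$ of length $k+1$ (every maximum code is maximal, hence arises from Construction~\ref{construction:fimmel}), and then feeds the same family $(L_i,R_i)$ into Construction~\ref{(1,k)-overlap} to build a $(1,k)$-overlap-free code of length $k+2$ containing $\bigcup_i L_1L_iR_{k+1-i}\cup\bigcup_i L_iR_{k+1-i}R_1$: letters of $L_1$ are prepended and letters of $R_1$ are appended, so each codeword of $C$ yields $\lvert L_1\rvert+\lvert R_1\rvert=q$ new codewords, and correctness is inherited from the already-proven correctness of Construction~\ref{(1,k)-overlap} via Proposition~\ref{proposition:z}. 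The $q^2$ case is handled analogously with the length-$(k+3)$ instance, using $\lvert L_2\rvert+\lvert R_2\rvert=\lvert L_1\rvert\,\lvert R_1\rvert$ to recover the factor $(\lvert L_1\rvert+\lvert R_1\rvert)^2=q^2$. If you want to repair your argument, you would need to make the insertion position depend on the inserted letter and on the codeword's decomposition, which is exactly what the paper's algebraic construction encodes.
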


\begin{proof}
    Let $C$ be a maximum non-overlapping code of length $k+1$. Since every maximum non-overlapping code is also maximal, $C$ can be obtained from Construction~\ref{construction:fimmel} using a collection of pairs of disjoint sets $(L_i,R_i)_{i\in \left[k\right]}$.
    \begin{align*}
        C = \bigcup_{i \in \left[k\right]} L_i R_{k+1-i}.
    \end{align*}

    If $k \geq 2$, the same collection of pairs of disjoint sets can be used to generate a $(1,k)$-overlap-free code $D_1$ of length $k+2$ with Construction~\ref{(1,k)-overlap},
    \begin{align*}
        D_1 = \bigcup_{i \in \left[k\right]} L_1 L_i R_{k+1-i} \cup \bigcup_{i \in \left[k\right]} L_i R_{k+1-i} R_1 \cup \bigcup_{i \in \left[2,k \right]} L_iR_{k+2-i}.
    \end{align*}
    The size of $D_1$ equals
    \begin{align*}
        \lvert D_1 \rvert &= \lvert L_1 \rvert \sum_{i\in \left[k\right]}  \lvert L_i \rvert \lvert R_{k+1-i}\rvert + \lvert R_1 \rvert \sum_{i\in \left[k\right]} \lvert L_i \rvert \lvert R_{k+1-i} \rvert + \sum_{i \in \left[2,k \right]} \lvert L_i \rvert \lvert R_{k+2-i} \rvert \\
        &= \left(\lvert L_1 \rvert + \lvert R_1 \rvert \right) \lvert C \rvert + \sum_{i \in \left[2,k \right]} \lvert L_i \rvert \lvert R_{k+2-i} \rvert \\
        &= q \, \lvert C \rvert + \sum_{i \in \left[2,k \right]} \lvert L_i \rvert \lvert R_{k+2-i} \rvert \\
        &\geq q \, \lvert C \rvert = q S_1^k(k+1,q).
    \end{align*}
    The size of the maximum $(1,k+2)$-overlap-free code is at least the size of $D_1$, proving the first claim of our proposition.

    We proceed with the second claim. If $k \geq 3$, the collection of pairs of disjoint sets $(L_i,R_i)_{i\in \left[k\right]}$ can be used to construct a $(1,k)$-overlap-free code $D_2$ of length $k+3$ with Construction~\ref{(1,k)-overlap},
    \begin{align*}
        D_2 = &\bigcup_{i \in \left[k\right]} (L_1^2  \cup L_2) L_i R_{k+1-i} \cup \bigcup_{i \in \left[k\right]} L_i R_{k+1-i} (R_1^2 \cup R_2) \cup \bigcup_{i \in \left[k\right]} L_1 L_i R_{k+1-i} R_1 \\
        & \cup L_1 \bigcup_{i \in \left[2,k \right]} L_iR_{k+2-i} \cup \bigcup_{i \in \left[2,k \right]} L_iR_{k+2-i} R_1 
        \cup \bigcup_{i \in \left[3,k \right]} L_iR_{k+3-i}.
    \end{align*}
    We compute its size
    \begin{align*}
        \lvert D_2 \rvert = &(\lvert L_1\rvert^2 + \lvert L_2 \rvert + \lvert R_1\rvert^2 + \lvert R_2 \rvert + \lvert L_1 \rvert \lvert R_1 \rvert) \lvert C \rvert + \\
        &(\lvert L_1 \rvert + \lvert R_1 \rvert) \sum_{i\in \left[2,k\right]} \lvert L_i \rvert \lvert R_{k+2-i} \rvert + \sum_{i \in \left[3,k\right]} \lvert L_i \rvert \lvert R_{k+3-i} \rvert.
    \end{align*}
    Knowing that $\lvert L_2 \rvert + \lvert R_2 \rvert =  \lvert L_1 \rvert \lvert R_1 \rvert$ and $\lvert L_1 \rvert + \lvert R_1 \rvert = q$, we can rewrite this size as follows
    \begin{align*}
    \lvert D_2 \rvert &= (\lvert L_1 \rvert + 2 \lvert L_1 \rvert \lvert R_1 \rvert + \lvert R_1 \rvert) \lvert C \rvert + q \sum_{i\in \left[2,k\right]} \lvert L_i \rvert \lvert R_{k+2-i} \rvert + \sum_{i \in \left[3,k\right]} \lvert L_i \rvert \lvert R_{k+3-i} \rvert \\
    &= q^2 \lvert C \rvert + q \sum_{i\in \left[2,k\right]} \lvert L_i \rvert \lvert R_{k+2-i} \rvert + \sum_{i \in \left[3,k\right]} \lvert L_i \rvert \lvert R_{k+3-i} \rvert.
    \end{align*}
    Since $S_1^k(q,k+3) \geq \lvert D_2 \rvert \geq q^2 \lvert C \rvert = q^2 S_1^k(q, k+1)$, the second claim is established.
\end{proof}

We determine a lower bound on the size of $(t_1,t_2)$-overlap-free codes using Proposition~\ref{proposition:lower_bound} and Construction~\ref{construction:t1t2}. 
\begin{corollary}
        \begin{align*}
            S_{t_1}^{t_2}(q,n) \geq q^{n-t_2-1} \left(q - \biggl\lfloor\frac{t_2}{t_2+1} \cdot q\biggr\rfloor\right) \biggl\lfloor\frac{t_2}{t_2+1} \cdot q\biggr\rfloor^{t_2},
        \end{align*}
    where 
    \begin{align*}
        k = \begin{cases}
            t_2 & \text{if } t_1 + t_2 \leq n, \\
            n-t_1 & \text{otherwise}.
        \end{cases}
    \end{align*}
\end{corollary}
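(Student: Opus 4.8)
The plan is to compose Construction~\ref{construction:t1t2}, which reduces the problem to exhibiting a suitable code $X$ of length $n-t_1+1$, with the lower bound on $(1,k)$-overlap-free codes from Proposition~\ref{proposition:lower_bound}. The observation that unifies the two branches in the definition of $k$ is that a non-overlapping code of length $m$ is exactly a $(1,m-1)$-overlap-free code. Hence, in both cases the code $X$ produced by Construction~\ref{construction:t1t2} may be regarded as a $(1,k)$-overlap-free code of length $n-t_1+1$: when $t_1+t_2\le n$ we have $t_2<n-t_1+1$ and take $k=t_2$, while when $t_1+t_2>n$ the construction uses a non-overlapping code of length $n-t_1+1$, which is $(1,n-t_1)$-overlap-free, so $k=n-t_1=(n-t_1+1)-1$.

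First I would verify that Proposition~\ref{proposition:lower_bound} applies to a code of length $n-t_1+1$ with the chosen $k$: in the first case $k=t_2\le n-t_1<n-t_1+1$, and in the second case $k=n-t_1<n-t_1+1$, so in both the requirement $1\le k<\text{length}$ is met. Applying the proposition to $X$ then yields
\begin{align*}
    \lvert X \rvert \ge \left(q-\biggl\lfloor\frac{k}{k+1}\,q\biggr\rfloor\right)\biggl\lfloor\frac{k}{k+1}\,q\biggr\rfloor^{k}\, q^{(n-t_1+1)-k-1}.
\end{align*}
Next I would substitute this into the size identity $\lvert C \rvert = \lvert X \rvert\, q^{t_1-1}$ from Construction~\ref{construction:t1t2}. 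Combining the two powers of $q$ gives exponent $(n-t_1-k)+(t_1-1)=n-k-1$, so that
\begin{align*}
    S_{t_1}^{t_2}(q,n) \ge \lvert C \rvert \ge q^{n-k-1}\left(q-\biggl\lfloor\frac{k}{k+1}\,q\biggr\rfloor\right)\biggl\lfloor\frac{k}{k+1}\,q\biggr\rfloor^{k},
\end{align*}
which is the claimed bound. Substituting $k=t_2$ recovers the written exponent $n-t_2-1$ for $t_1+t_2\le n$, and $k=n-t_1$ gives the exponent $t_1-1$ in the remaining case.

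The argument is essentially a bookkeeping composition, so there is no deep obstacle. The only point requiring care is the unifying remark that a length-$m$ non-overlapping code is $(1,m-1)$-overlap-free, since this is what lets a single invocation of Proposition~\ref{proposition:lower_bound} absorb the separate non-overlapping branch of the construction; keeping the exponent arithmetic consistent across the two values of $k$ is then routine.
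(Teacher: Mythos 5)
Your proposal is correct and follows essentially the same route as the paper: instantiate Construction~\ref{construction:t1t2} with a maximum $(1,t_2)$-overlap-free (resp.\ non-overlapping) code $X$ of length $n-t_1+1$ to get $S_{t_1}^{t_2}(q,n)\geq q^{t_1-1}S_1^k(q,n-t_1+1)$, then apply Proposition~\ref{proposition:lower_bound} and collect the powers of $q$ into the exponent $n-k-1$. Your explicit remark that a length-$m$ non-overlapping code is $(1,m-1)$-overlap-free, which unifies the two branches, is a point the paper leaves implicit but is the same argument.
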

\begin{proof}
If $t_2 \leq (n-t_1+1)-1$, i.e., $t_1+t_2 \leq n$, Construction~\ref{construction:t1t2} provides the code $X\Sigma^{t_1-1}$ with the largest size when $X$ is a maximum $(1,t_2)$-overlap-free code.
Otherwise, it provides the code $X\Sigma^{t_1-1}$ with the largest size when $X$ is a maximum non-overlapping code. Hence, for 
    \begin{align*}
        k = \begin{cases}
            t_2 & \text{if } t_1 + t_2 \leq n, \\
            n-t_1 & \text{otherwise},
        \end{cases}
    \end{align*}
\begin{align*}
    S_{t_1}^{t_2}(q,n) &\geq q^{t_1-1} S_{1}^{k}(q,n-t_1+1).
\end{align*}
Now, use Proposition~\ref{proposition:lower_bound}
\begin{align*}
    S_{t_1}^{t_2}(q,n) &\geq q^{t_1-1 + n-t_1+1-k-1} \left(q - \biggl\lfloor\frac{k}{k+1} \cdot q\biggr\rfloor\right) \biggl\lfloor\frac{k}{k+1} \cdot q\biggr\rfloor^{k} \\
    &= q^{n-k-1} \left(q - \biggl\lfloor\frac{k}{k+1} \cdot q\biggr\rfloor\right) \biggl\lfloor\frac{k}{k+1} \cdot q\biggr\rfloor^{k},
\end{align*}
to conclude the proof.
\end{proof}
When $t_1 + t_2 \leq n$ and $t_2 + 1$ divides $q$, 
\begin{align*}
        S_{t_1}^{t_2}(q,n) &\geq \left(\frac{t_2}{t_2+1}\right)^{t_2} \frac{q^n}{t_2+1}\\
        &\geq e^{-1} \cdot \frac{q^n}{t_2+1},
\end{align*}
and the upper bound on the size of $(t_1,t_2)$-overlap-free codes from Section~\ref{section:upper} is tight to up to a factor $\bigl\lfloor \frac{2n-t_1}{n-t_2}\bigr\rfloor\cdot((t_2+1)e)^{-1}$. For large values of $q$, the lower bound is also close to this factor, but the gap may be much greater when $q$ is small. In particular, if $q < t_2 + 1$,
\begin{align*}
    S_{t_1}^{t_2}(q,n) &\geq q^{n-t_2-1}(q-1)^{t_2} \\
    &\geq \left(\frac{q-1}{q}\right)^{t_2} \frac{q^n}{q} \\
    &\geq \left(\frac{q-1}{q}\right)^{t_2} \frac{q^n}{t_2 + 1}.
\end{align*}

Lastly, we derive a lower bound on the size of a simultaneously $(1,k)$-overlap-free and $(n-k,n-1)$-overlap-free code from Construction~\ref{construction:simmultaneous}. The proof also shows that the maximum number of words in a simultaneously $(1,1)$-overlap-free and $(n-1,n-1)$-overlap-free code is at least $(q-1)^2q^{n-3}$.

\begin{proposition}
    Let $1 < k < n / 2$.
    The maximum number of words in a simultaneously $(1,k)$-overlap-free and $(n-k,n-1)$-overlap-free code is at least
    \begin{align*}
        \biggl \lfloor \frac{qk}{k+2} \biggr \rfloor^{k} \left(q - \biggl \lfloor \frac{qk}{k+2} \biggr \rfloor\right)^2 q^{n-k-2}.
    \end{align*}
\end{proposition}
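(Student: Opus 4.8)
The plan is to instantiate Construction~\ref{construction:simmultaneous} with a one-parameter family of partitions and then optimise over that parameter, in direct analogy with the proof of Proposition~\ref{proposition:lower_bound}. Fix an integer $a$ with $1 \le a \le q-1$, partition $\Sigma$ into a set $L_1$ of size $a$ and a set $R_1$ of size $q-a$, and set $L_i = \emptyset$ for every $i \in \left[2,k\right]$. The defining requirement of the construction then forces $R_i = L_1^{\,i-1}R_1$ for $2 \le i \le k$, so that $X = \bigcup_{i \le k} L_i R_{k+1-i} = L_1 R_k = L_1^{\,k}R_1$ is a length-$(k+1)$ code of the shape in Construction~\ref{construction:fimmel}, hence non-overlapping, with $\lvert X \rvert = a^{k}(q-a)$. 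First I would record that this family meets all the hypotheses of Construction~\ref{construction:simmultaneous}, so the resulting $C$ is automatically simultaneously $(1,k)$- and $(n-k,n-1)$-overlap-free.

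The second step is to evaluate $\lvert C \rvert$ exactly. The prefix block $X$ occupies positions $1,\dots,k+1$, the free block $\Sigma^{n-2k-1}$ occupies positions $k+2,\dots,n-k$, and each suffix block has length $k$ and occupies positions $n-k+1,\dots,n$; since these ranges are disjoint, the word count factors as $\lvert X\rvert \cdot q^{n-2k-1}$ times the number of distinct suffixes. For the suffixes, the key observation is that with $R_i = L_1^{\,i-1}R_1$ a word of $R_{\alpha_1}\cdots R_{\alpha_{par(\alpha)}}$ is exactly a length-$k$ word whose $R_1$-letters sit precisely at the partial sums $\alpha_1,\alpha_1+\alpha_2,\dots,k$; because $L_1$ and $R_1$ partition $\Sigma$, the positions of the $R_1$-letters are read off unambiguously and recover $\alpha$. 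Hence distinct compositions give disjoint suffix sets, and $\bigcup_\alpha R_{\alpha_1}\cdots R_{\alpha_{par(\alpha)}}$ is precisely the set of all length-$k$ words ending in $R_1$, of which there are $(q-a)q^{k-1}$. (Equivalently, one sums $\sum_{p}\binom{k-1}{p-1}a^{k-p}(q-a)^{p}=(q-a)q^{k-1}$.) Combining the three factors gives
\begin{align*}
    \lvert C \rvert = a^{k}(q-a)\cdot q^{n-2k-1}\cdot (q-a)q^{k-1} = a^{k}(q-a)^{2}q^{n-k-2}.
\end{align*}

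Finally I would optimise $f(a)=a^{k}(q-a)^{2}$ over $a\in\{1,\dots,q-1\}$. Viewing $f$ as a real function on $[0,q]$, its derivative is $f'(a)=a^{k-1}(q-a)\bigl(kq-(k+2)a\bigr)$, whose unique interior zero is $a^{*}=\tfrac{kq}{k+2}$, a maximum. Because $1<k$ forces $\tfrac{k}{k+2}\ge\tfrac12$, the integer $a=\bigl\lfloor \tfrac{qk}{k+2}\bigr\rfloor$ lies in $\{q/2,\dots,q-1\}$ and so is an admissible partition size; substituting it into the size formula yields the stated bound. The $(1,1)$-/$(n-1,n-1)$-overlap-free remark is the degenerate instance: taking $k=1$ the same construction gives $C=L_1R_1\Sigma^{n-3}R_1$ of size $a(q-a)^{2}q^{n-3}$, and the choice $a=1$ produces $(q-1)^{2}q^{n-3}$.

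The step I expect to be the main obstacle is the exact size count, specifically justifying that the union over compositions in Construction~\ref{construction:simmultaneous} is disjoint, so that the sum of products genuinely equals $\lvert C\rvert$ rather than overcounting (a lower bound tolerates undercounting but not \emph{over}counting). The uniqueness of the composition recovered from the $R_1$-positions is the crux; it is the specialisation of Proposition~\ref{proposition:unique_index} to this family, and everything after it — the binomial summation and the single-variable optimisation — is routine.
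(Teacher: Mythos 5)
Your proposal is correct and follows essentially the same route as the paper: the same one-parameter family ($L_1$ of size $a$, all higher $L_i$ empty, so $X = L_1^kR_1$), the same size formula $a^k(q-a)^2q^{n-k-2}$, and the same optimisation at $a^* = qk/(k+2)$ rounded down. Your explicit justification that the union over compositions is disjoint --- reading off $\alpha$ from the positions of the $R_1$-letters, so the suffix set is exactly the $(q-a)q^{k-1}$ words of length $k$ ending in $R_1$ --- is a point the paper's proof leaves implicit, and is a welcome addition.
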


\begin{proof}
Let $x$ be an integer between 0 and $q$.
Take a partition of $\Sigma$ so that $L_1$ has $x$ elements and $R_1$ has $q-x$ elements. Use Construction~\ref{construction:fimmel} to obtain a non-overlapping code $X$ by setting $L_i$ empty for $i > 1$. Note that $R_i = L_1^{i-1}R_1$ and $X = L_1^k R_1$. 
Now use Construction~\ref{construction:simmultaneous} to get a simultaneously $(1,k)$-overlap-free and $(n-k,n-1)$-overlap-free code
\begin{align*}
    C = \bigcup_{\alpha} X \Sigma^{n-2k-1} R_{\alpha_1} \cdots R_{\alpha_{par(\alpha)}}.
\end{align*}
Its size equals
\begin{align*}
    \lvert C \rvert &= \lvert X \rvert q^{n-2k-1} \sum_{\alpha}  \lvert L \rvert^{k-par(\alpha)} \lvert R \rvert^{par(\alpha)} \nonumber\\
    &= x^k (q-x) q^{n-2k -1} \sum_{l=1}^{k} \binom{k-1}{l-1} x^{k-l}(q-x)^l.
\end{align*}
Applying the binomial theorem we obtain
\begin{align*}
    \lvert C \rvert &= x^{k} (q-x)^2 q^{n-2k -1} (x + q - x)^{k-1} \\
    &= x^{k} (q-x)^2 q^{n-k -2}.
\end{align*}
Since $f$ is differentiable, we can determine from its derivative
    \begin{align*}
        \lvert C \rvert ' (x) = 
        q^{n-k-2}x^k (qk-(k+2)x)(q-x)
    \end{align*}
    that the maximum on $(0,q)$ is at $x = \frac{qk}{k+2}$. 
    Since there are no other local extrema, $f$ restricted to integers attains its maximum at one of the roundings of $x$ to an integer. We again decide to round the value down because for $k > 1$, $1 \leq \bigl \lfloor \frac{qk}{k+2} \bigr \rfloor \leq q - 1$, while whenever $2q < k + 2$, $\bigl \lceil \frac{qk}{k+2} \bigr\rceil > q -1$.
\end{proof}

\section*{Declarations}
\subsection*{Funding}
This research was partially supported by the scientific research program P2-0359 financed by the Slovenian Research and Innovation Agency.

\subsection*{Competing Interests}
The author certifies that they have no affiliations with or involvement in any organisation or entity with any financial or non-financial interest in the subject matter or materials discussed in this manuscript.

\subsection*{Data availability}
Not applicable.

\end{document}